\DeclareSymbolFont{CMlargesymbols}{OMX}{cmex}{m}{n}
\DeclareMathSymbol{\sum}{\mathop}{CMlargesymbols}{"50}
\DeclareMathSymbol{\prod}{\mathop}{CMlargesymbols}{"51}
\DeclareMathAlphabet{\mathcal}{OMS}{cmsy}{m}{n}
\DeclareSymbolFont{Letters}{OML}{cmm}{m}{it}
\DeclareMathSymbol{\alpha}{\mathalpha}{Letters}{11}
\DeclareMathSymbol{\delta}{\mathalpha}{Letters}{14}
\DeclareMathSymbol{\epsilon}{\mathalpha}{Letters}{15}
\DeclareMathSymbol{\lambda}{\mathalpha}{Letters}{21}
\DeclareMathSymbol{\Lambda}{\mathalpha}{Letters}{3}
\DeclareMathSymbol{\pi}{\mathalpha}{Letters}{25}
\DeclareMathSymbol{\rho}{\mathalpha}{Letters}{26}
\DeclareMathSymbol{\sigma}{\mathalpha}{Letters}{27}
\DeclareMathSymbol{\Delta}{\mathalpha}{Letters}{1}
\DeclareMathSymbol{\Psi}{\mathalpha}{Letters}{9}
\DeclareMathSymbol{Q}{\mathalpha}{Letters}{81}
\DeclareSymbolFont{symbols}{OMS}{cmm}{m}{n}
\DeclareMathSymbol{\infty}{\mathord}{symbols}{"31}
\newcommand\wihat[1]{%
\savestack{\tmpbox}{\stretchto{%
		\scaleto{%
			\scalerel*[\widthof{\ensuremath{#1}}]{\kern-.6pt\bigwedge\kern-.6pt}%
			{\rule[-\textheight/2]{1ex}{\textheight}}
		}{\textheight}%
	}{0.5ex}}%
\stackon[1pt]{#1}{\tmpbox}%
}
\renewcommand*{\CustomAcronymFields}
{
	name={\the\glsshorttok},
	description={\the\glslongtok},
	first={\noexpand\emph{\the\glslongtok}\space(\the\glsshorttok)},%
	firstplural={\noexpand\emph{\the\glslongtok\noexpand\acrpluralsuffix}\space(\the\glsshorttok)},%
	text={\the\glsshorttok},%
	plural={\the\glsshorttok\noexpand\acrpluralsuffix}%
}
\newacronym{osp}{OSP}{Observer Selection Problem}
\newacronym{psd}{PSD}{positive semi-definite}
\newacronym{rl}{RL}{reinforcement learning}
\newacronym{lti}{LTI}{linear time-invariant}
\newacronym{zoh}{ZOH}{zero-order hold}
\newacronym{plc}{PLC}{programmable logic controller}
\newacronym{fcfs}{FCFS}{first-come first-serve}
\newacronym{fifo}{FIFO}{first-in first-out}
\newacronym{lut}{LUT}{lookup table}
\newacronym{dp}{DP}{dynamic programming}
\newacronym{bnb}{B\&B}{branch-and-bound}
\newacronym{kf}{KF}{Kalman filter}
\newacronym{lqg}{LQG}{linear-quadratic-Gaussian}
\newacronym{mdp}{MDP}{Markov decision process}
\newacronym{pomdp}{POMDP}{partially observable Markov decision process}
\newacronym{cdf}{CDF}{cumulative density function}
\newacronym{ccdf}{CCDF}{complementary cumulative density function}
\newacronym{pdf}{PDF}{probability density function}
\newacronym{pmf}{PMF}{probability mass function}
\newacronym{rhs}{RHS}{right hand side}
\newacronym{lhs}{LHS}{left hand side}
\newacronym{mse}{MSE}{mean squared error}
\newacronym{mmse}{MMSE}{minimum mean squared error}
\newacronym{snr}{SNR}{signal-to-noise ratio}
\newacronym{qos}{QoS}{QoS}
\newacronym{qoe}{QoE}{quality of experience}
\newacronym{lan}{LAN}{local-area network}
\newacronym{wan}{WAN}{wide-area network}
\newacronym{urllc}{URLLC}{Ultra-Reliable Low-Latency Communications}
\newacronym{embb}{eMBB}{enhanced Mobile Broadband}
\newacronym{pcf}{PCF}{point coordination function}
\newacronym{ncs}{NCS}{networked control system}
\newacronym{ap}{AP}{access point}
\newacronym{comp}{CoMP}{Coordinated Multipoint}
\newacronym{sumiso}{SU-MISO}{single-user multiple-input-single-output}
\newacronym{ibl}{IBL}{infinite block length}
\newacronym{fbl}{FBL}{finite block length}
\newacronym{icn}{ICN}{industrial control network}
\newacronym{wsn}{WSN}{wireless sensor network}
\newacronym{rt}{RT}{real-time}
\newacronym{tdm}{TDM}{time division multeplxing}
\newacronym{isi}{ISI}{inter-symbol interference}
\newacronym{nist}{NIST}{National Institute of Standards and Technology}
\newacronym{cbrs}{CBRS}{Citizens Broadband Radio Service}
\newacronym{los}{LOS}{line-of-sight}
\newacronym{nlos}{NLOS}{non-line-of-sight}
\newacronym{itu}{ITU}{International Telecommunications Union}
\newacronym{mmwave}{mmWave}{millimeter wave}
\newacronym{nsr}{NSR}{noise-to-signal ratio}
\newacronym{das}{DAS}{distributed antenna system}
\newacronym{csi}{CSI}{channel state information}
\newacronym{cqi}{CQI}{channel quality indicator}
\newacronym{ack}{ACK}{acknowledgement}
\newacronym{arq}{ARQ}{automatic repeat request}
\newacronym{awgn}{AWGN}{additive white Gaussian noise}
\newacronym{cc}{CC}{chase combining}
\newacronym{fec}{FEC}{forward error correction}
\newacronym{harq}{HARQ}{hybrid automatic repeat request}
\newacronym{hspa}{HSPA}{high speed packet access}
\newacronym{iid}{i.i.d.}{independent and identically distributed}
\newacronym{ir}{IR}{incremental redundancy}
\newacronym{lte}{LTE}{long term evolution}
\newacronym{mrc}{MRC}{maximal-ratio combining}
\newacronym{nack}{NAK}{negative acknowledgement}
\newacronym{wimax}{WiMax}{worldwide interoperability for microwave access}
\newacronym{3gpp}{3GPP}{3rd generation partnership project}
\newacronym{ofdm}{OFDM}{orthogonal frequency-division multiplexing}
\newacronym{ofdma}{OFDMA}{orthogonal frequency-division multiple access}
\newacronym{wlan}{WLAN}{wireless local area network}
\newacronym{gsm}{GSM}{global system for mobile communications}
\newacronym{edge}{EDGE}{enhanced data \gls{gsm} environment}
\newacronym{stbc}{STBC}{space-time block code}
\newacronym{amc}{AMC}{adaptive modulation and coding}
\newacronym{sinr}{SINR}{signal to interference and noise ratio}
\newacronym{mi}{MI}{mutual information}
\newacronym{acmi}{ACMI}{accumulated mutual information}
\newacronym{nacmi}{NACMI}{normalized ACMI}
\newacronym{cdi}{CDI}{channel distribution information}
\newacronym{latr}{LATR}{long-term average transmission rate}
\newacronym{rtr}{RTR}{round transmission rate}
\newacronym{fd}{FD}{full-duplex}
\newacronym{hd}{HD}{half-duplex}
\newacronym{td}{TD}{Time Division}
\newacronym{tdma}{TDMA}{time division multiple access}
\newacronym{mac}{MAC}{Media Access Control}
\newacronym{uwb}{UWB}{Ultra Wideband}
\newacronym{ieee}{IEEE}{institute of electrical and electronics engineers}
\newacronym{dB}{dB}{decibel}
\newacronym{min}{Min.}{minimum}
\newacronym{med}{Med.}{median}
\newacronym{avg}{Avg.}{average}
\newacronym{ul}{UL}{up-link}
\newacronym{dl}{DL}{downlink}
\newacronym{app}{APP}{a-posteriori probability}
\newacronym{logmap}{LogMAP}{log maximum a-posteriori}
\newacronym{llr}{LLR}{log-likelihood ratio}
\newacronym{ue}{UE}{user equipment}
\newacronym{5g}{5G}{5\textsuperscript{th} generation mobile networks}
\newacronym{4g}{4G}{4\textsuperscript{th} generation mobile networks}
\newacronym{tti}{TTI}{transmission time interval}
\newacronym{rrm}{RRM}{radio resource management}
\newacronym{mmib}{MMIB}{mean mutual information per bit}
\newacronym{dsi}{DSI}{decoder state information}
\newacronym{tb}{TB}{transport block}
\newacronym{tbs}{TBS}{transport block size}
\newacronym{cb}{CB}{code block}
\newacronym{cbg}{CBG}{code block group}
\newacronym{cbs}{CBS}{code block size}
\newacronym{prb}{PRB}{physical resource block}
\newacronym{rb}{RB}{resource block}
\newacronym{bler}{BLER}{block error rate}
\newacronym{blep}{BLEP}{block error probability}
\newacronym{crc}{CRC}{cyclic redundancy check}
\newacronym{tdd}{TDD}{time division duplexing}
\newacronym{fdd}{FDD}{frequency division duplex}
\newacronym{mcc}{MCC}{mission critical communication}
\newacronym{mmc}{MMC}{massive machine communication}
\newacronym{mtc}{MTC}{machine type of communication}
\newacronym{mmtc}{mMTC}{massive machine type of communication}
\newacronym{umtc}{uMTC}{ultra-reliable \gls{mtc}}
\newacronym{rtt}{RTT}{round trip time}
\newacronym{rs}{RS}{reference symbols}
\newacronym{kpi}{KPI}{key performance indicator}
\newacronym{kpis}{KPIs}{key performance indicators}
\newacronym{tx}{Tx}{transmitter node}
\newacronym{rx}{Rx}{receiver node}
\newacronym{cran}{C-RAN}{centralized radio access network}
\newacronym{rru}{RRU}{remote radio unit}
\newacronym{bbu}{BBU}{baseband unit}
\newacronym{fhd}{FHD}{fronthaul delay}
\newacronym{cch}{CCH}{control channel}
\newacronym{saw}{SAW}{stop-and-wait}
\newacronym{qci}{QCI}{\gls{qos} class identifier}
\newacronym{gbr}{GBR}{guaranteed bit rate}
\newacronym{mbr}{MBR}{maximum bit rate}
\newacronym{ngbr}{non-GBR}{non-\gls{gbr}}
\newacronym{arp}{ARP}{allocation and retention priority}
\newacronym{effcr}{ECR}{effective coding rate}
\newacronym{mcs}{MCS}{modulation and coding scheme}
\newacronym{eva}{EVA}{extended vehicular A}
\newacronym{epa}{EPA}{extended pedestrian A}
\newacronym{etu}{ETU}{extended typical urban}
\newacronym{re}{RE}{resource element}
\newacronym{reS}{REs}{resource elements}
\newacronym{nr}{NR}{new radio}
\newacronym{qpsk}{QPSK}{quadrature phase shift keying}
\newacronym{qam}{QAM}{quadrature amplitude modulation}
\newacronym{siso}{SISO}{single-input and single-output}
\newacronym{miso}{MISO}{multiple-input single-output}
\newacronym{mimo}{MIMO}{multiple-input multiple-output}
\newacronym{bs}{BS}{base station}
\newacronym{phy}{PHY}{physical layer}
\newacronym{rlc}{RLC}{radio link control}
\newacronym{bcfsaw}{BCF-SAW}{BCF-SAW}
\newacronym{bcf}{BCF}{backwards composite feedback}
\newacronym{bac}{BAC}{binary asymmetric channel}
\newacronym{bsc}{BSC}{binary symmetric channel}
\newacronym{dtx}{DTX}{discontinued transmission}
\newacronym{bpsk}{BPSK}{binary phase shift keying}
\newacronym{bep}{BEP}{bit error probability}
\newacronym{ndi}{NDI}{new data indicator}
\newacronym{dci}{DCI}{downlink control information}
\newacronym{csit}{CSIT}{channel state information at the transmitter}
\newacronym{lt}{LT}{loudest talker}
\newacronym{ct}{CT}{cooperative transmission}
\newacronym{bpcu}{bpcu}{bits per channel use}
\definecolor{clr_my_dgray}{rgb}{0.4, 0.4, 0.4}
\definecolor{clr_steel_yello}{RGB}{239, 174, 24}
\definecolor{clr_steel_blue}{RGB}{0, 82, 156}
\definecolor{clr_steel_red}{RGB}{199, 3, 45}
\def\@IEEEsectpunct{:\ \,}
\def\paragraph{\@startsection{paragraph}{4}{\z@}{1.5ex plus 1.5ex minus 0.5ex}%
{0ex}{\normalfont\normalsize\bfseries}}
\newsavebox{\measure@tikzpicture}
	\def\tikz@width{#1}%
	\def\tikzscale{1}\begin{lrbox}{\measure@tikzpicture}%
	\edef\tikzscale{\pgfmathresult}%
\newcommand{\powerset}[1]{\mathscr{P}\left(#1\right)}
\DeclarePairedDelimiter\ceilceil{\lceil}{\rceil}
\DeclarePairedDelimiter\floorfloor{\lfloor}{\rfloor}
\newcommand{\ceil}[1]{\ceilceil*{#1}}
\newcommand{\floor}[1]{\floorfloor*{#1}}
\DeclareMathOperator*{\argmin}{\arg\,\min}
\newcommand{\expectno}[1]{\mathbb{E}\,#1}
\DeclarePairedDelimiter{\norm}{\lVert}{\rVert}
\newcommand{\tr}[1]{\textrm{#1}}
\newcommand{\trno}{{\textrm{tr}}}
\newcommand{\diff}{\mathrm{d}}
\theoremstyle{definition} 
\newtheorem{proposition}{Proposition}
\newcolumntype{C}{>{\(\displaystyle}c<{\)}@{}} 
\newcolumntype{L}{>{\(\displaystyle}l<{\)}@{}} 
\newcolumntype{R}{>{\(\displaystyle}r<{\)}@{}}
\newcommand{\secref}[1]{Sec.~\ref{#1}}
\newcommand{\figref}[1]{Fig.~\ref{#1}}
\newcommand{\transp}[1]{{#1}^{\mathsf{T}}}
\newcommand{\transpp}{\mathsf{T}}
\newcommand{\sft}{\mathsf{T}}
\newcommand{\bx}{\mathbf{x}}
\newcommand{\by}{\mathbf{y}}
\newcommand{\bu}{\mathbf{u}}
\newcommand{\bv}{\mathbf{v}}
\newcommand{\bw}{\mathbf{w}}
\newcommand{\baa}{\mathbf{A}}
\newcommand{\bbb}{\mathbf{B}}
\newcommand{\bcc}{\mathbf{C}}
\newcommand{\bqq}{\mathbf{Q}}
\newcommand{\brr}{\mathbf{R}}
\newcommand{\va}{\mathbf{a}}
\newcommand{\vc}{\mathbf{c}}
\newcommand{\ve}{\mathbf{e}}
\newcommand{\vs}{\mathfrak{s}}
\newcommand{\vx}{\mathbf{x}}
\newcommand{\vy}{\mathbf{y}}
\newcommand{\vu}{\mathbf{u}}
\newcommand{\vv}{\mathbf{v}}
\newcommand{\vw}{\mathbf{w}}
\newcommand{\vzero}{\mathbf{0}}
\newcommand{\vxh}{\hat{\mathbf{x}}}
\newcommand{\mxa}{\mathbf{A}}
\newcommand{\mxb}{\mathbf{B}}
\newcommand{\mxc}{\mathbf{C}}
\newcommand{\mxq}{\mathbf{Q}}
\newcommand{\mxr}{\mathbf{R}}
\newcommand{\mxs}{\mathbf{S}}
\newcommand{\mxp}{\mathbf{P}}
\newcommand{\mxi}{\mathbf{I}}
\newcommand{\mxphi}{\boldsymbol{\Phi}}
\newcommand{\mxlambda}{\mathbf{\Lambda}}
\newcommand{\vnu}{\boldsymbol{\nu}}
\newcommand{\myy}[3]{y_{#1}^{(#2)}[#3]}
\newcommand{\myt}[3]{t_{#1}^{(#2)}[#3]}
\renewcommand{\tr}[1]{\text{tr}\left(#1\right)}
\newcommand{\mse}{\mathsf{MSE}}
\begin{document}

\title{Scheduling Observers Over a Shared Channel with Hard Delivery Deadlines}
\author{Rebal Jurdi, Jeffrey G. Andrews, and Robert W. Heath Jr.\\
\thanks{Rebal Jurdi is currently with Samsung Research America, Jeffrey G. Andrews, and Robert W. Heath Jr. are with the Wireless Networking and Communications Group, The University of Texas at Austin.}}
\maketitle
\begin{abstract}
We abstract the core logical functions from applications that require ultra-low-latency wireless communications to provide a novel definition for reliability. Real-time applications — such as intelligent transportation, remote surgery, and industrial automation — involve a significant element of control and decision making.  Such systems involve three logical components: observers (e.g. sensors) measuring the state of an environment or dynamical system, a centralized executive (e.g. controller) deciding on the state, and agents (e.g. actuators) that implement the executive’s decisions. The executive harvests the observers' measurements and decides on the short-term trajectory of the system by instructing its agents to take appropriate actions.  All observation packets  (typically uplink) and action packets (typically downlink) must be delivered by hard deadlines to ensure the proper functioning of the controlled system.  In-full on-time delivery cannot be guaranteed in wireless systems due to inherent uncertainties in the channel such as fading and unpredictable interference; accordingly, the executive will have to drop some packets. 
We develop a novel framework to formulate the \textit{Observer Selection Problem} (OSP) through which the executive schedules a sequence of observations that maximize its knowledge about the current state of the system. To solve this problem efficiently yet optimally, we devise a branch-and-bound algorithm that systematically prunes the search space. 
Our work is different from existing work on real-time communications in that communication reliability is not conveyed by packet loss or error rate, but rather by the extent of the executive's knowledge about the state of the system it controls. 
\end{abstract}


\begin{IEEEkeywords}
Ultra-reliability, low-latency, infrastructure wireless networks, hard deadlines, decision making, sensor scheduling, sensor selection, branch-and-bound, IoT, 5G, URLLC, mMTC.
\end{IEEEkeywords}

\section{Introduction}\label{s:intro}
Emerging applications that require ultra low latency and high reliability and availability, e.g. autonomous driving, telemedicine and augmented reality \cite{simsek:2016}, are historically rooted in closed-loop process control and automation.
\Glsdesc{icn}s, also known as \glsdesc{ncs}s, are control systems where the control loops are closed through a communication network \cite{galloway:2013,moyne:2007}. These systems comprise of sensors, actuators and a centralized \gls{plc} that are interconnected by a wired network (known as \textit{fieldbus}) and interact as follows.
The sensors measure the state of an underlying plant or process and feed their measurements to the \gls{plc}.
The \gls{plc} computes the difference between the measured process variables and a desired setpoint, computes corrective actions, and sends those to the actuators \cite{frotzscher:2014}.
The actuators receive the actions and apply them, fulfilling the \gls{plc}’s control objective.
To guarantee synchronized behavior, all action packets must be delivered by a hard deadline \cite{neumann:2007}, which further induces an implicit deadline on delivering measurement packets.


The feedback loop linking sensors and actuators to the PLC is a template for a generic control or multistage decision process consisting of 3 key components: observers, agents, and a centralized executive. 
On a cycle-by-cycle basis,
the observers measure the state of an underlying dynamical system or environment and feed their observations to a centralized executive.
The executive processes the observations and decides on the short-term trajectory of the system by issuing a set of actions to
its agents who interact directly with the system.
In autonomous driving, a centralized motion planner can coordinate the motion of a fleet of vehicles by collecting their GPS and gyroscope readings and then controlling their throttles and brakes \cite{khabbaz:2019,badue:2019,pendleton:2017}. In telemedicine, a personal wellness system continuously monitors a patient's vital signs through a set of wearables for instant intervention and care provision \cite{kakria:2015,varshney:2007}.

Wires are impossible to deploy in many of these emerging applications due to their distributed nature.
The executive is typically a software application that runs on a mobile device, a server, or in the cloud; it integrates information from fragmented, possibly geographically-separated data sources \cite{grosky:2007}. 
The observers, agents, and centralized executive are not only logically distinct, but also physically distant entities that can only communicate wirelessly \cite{oteafy:2019}. 
Cutting the wires, however, introduces new challenges due to inherent uncertainties in wireless channels such as fading and unpredictable interference; consequently, the reliable and timely delivery of measurement and control packets can no longer be guaranteed.

The knowledge of the state of the system is the bedrock of the fulfillment of the executive's control policy, but wireless communications compromise this knowledge.
Based on the current state of the system, the executive transitions to a desired state by deciding on appropriate actions to be applied by its agents. There are, however, two uncertainties that prevent the perfect knowledge of state of the system: unpredictable interference and channel fading. 
Observations contaminated with unpredictable interference can prevent the inference of the true state of the system, and channel fades can prevent the exchange of all observation packets by their deadlines. 
{\color{black}
If there is insufficient time to transmit all observation packets, then only a select few should be transmitted. Accordingly, the executive has to identify the observations to be scheduled. 
In this paper, we assume that the executive selects the observations that maximize its knowledge of the state of the system so that it can instruct its agents to take commensurate actions to fulfill its policy.
Deciding which observations add most to the executive's knowledge may appear infeasible without knowledge of the content of these very observations. Given a model of a system's dynamics and its initial state, however, Bayesian estimation provides a methodology to evaluate state estimators from limited observations.
}

{\color{black}
The objective of the paper is to introduce and promote a notion of reliability that evaluates a communication system in relation to the application that it serves. The applications of interest are of the decision-making, control-type where a deadline is imposed on the delivery of exchanged packets. Information to be exchanged is abundant, yet communication resources are scarce, so packet loss will be rampant. Traditional metrics like packet loss rate and throughput cease to be appropriate as a measure of reliability because packet loss is inevitable. Hence, a new metric that evaluates the communication system is warranted. Since a communication system only exists to serve an application, it ought to be deemed as successful as said application (measured in an application-specific metric). We specify a common performance metric for real-time decision-making applications: the precision of the executive's knowledge about the state of the system it oversees, formalized in the Bayesian framework through the \gls{mse} of the state estimate. The reason for this choice is that when the executive has a reliable view of the system, it is able to take justified actions to control the system and implement its policy. Accordingly, the communication system is as reliable as the executive's view of the system. 
}

In this paper, we formulate the \textit{Observer Selection Problem} through which the executive schedules the set of observations that maximize its belief about the state of the system. We list the contributions of this paper next.

\subsection{Contributions}
The objective of this paper is to formulate and solve the Observer Selection Problem (OSP) by developing a framework that abstracts away the context around different control and multistage decision processes. Our contributions are:

\begin{itemize}
\item We derive the Kalman filter equations to predict the state of an \gls{lti} system at decision cycle boundaries from scalar observations sampled at different rates. We use an ensemble of observation models to describe multi-rate observations, and a continuous-time system description to tie together the observations and estimate the state of the system.

\item We formulate OSP which the executive solves every decision cycle for the \textit{optimal observation sequence}. We derive the objective function to be minimized, representing the \gls{mse} of the state estimate at decision cycle boundaries. We show that the \gls{mse} is an iterated function. We further define a latency and dependency constraint to ensure that the observations are collected by a hard deadline.
 
\item We design a \glsfirst{bnb} algorithm to solve OSP. The algorithm uses a \textit{subset forest} to systematically iterate over observation sequences by pruning a forest tree once it encounters a non-schedulable sequence.

\end{itemize}

\subsection{Related Work}
Our work lies at the intersection of three research areas: estimation in \glsdesc{ncs}s, sensor selection, deadline-constrained scheduling, and is related to age of information and resource allocation in 5G \gls{urllc}. For every research area, we define the central problem, frame the novelty of our work within related work, and highlight the key points of difference. 

\paragraph{Estimation in networked control systems} Research in this area solves control and estimation problems where sensor measurements and actuator controls are sent over lossy networks rife with packet delays and dropouts \cite{hespanha:2007,schenato:2007}. Our work is different than existing research in four ways. 
First, existing work analyzes the estimation \gls{mse} of a Kalman filter when measurement packets are randomly dropped in the network \cite{shi:2010,sinopoli:2004}, but our work considers the problem of \textit{selecting the set of observations} that minimize \gls{mse}; accordingly, some observations are deliberately dropped. 
Second, prior work assumes a uniform rate at which the system's state is updated, inputs refreshed, outputs sampled, and estimate updated, but we assume that the outputs are sampled at different rates, leading to an ensemble of observation models with distinct time references. While there is an existing niche in multi-sensor multi-rate estimation \cite{glasson:1980}, work therein resolves sampling rate incoherence and simplifies analysis by assuming that sampling rates are integer multiples of the state update rate \cite{fangfang:2014,liang:2009,yan:2007,yan:2006}. 
Third, prior work uses a continuous-time state description to bridge multi-rate observations in a distributed context, but we use this description in a centralized context. Most related to our work, \cite{alouani:2005} considers a distributed estimation problem and proposes a rule for synchronizing local \textit{estimates} produced by noncoherent sensors at a fusion center. Unlike prior work, we propose a rule for synchronizing local \textit{observations} at the (centralized) executive.
Fourth, previous work in multi-rate estimation propose rules for updating the state estimate from out-of-order measurements (including the so-called mixed-time filters) \cite{barshalom:2002, jiang:2017,yan:2015}, but we update the estimate from ordered measurements.

\paragraph{Sensor selection} The sensor selection problem is defined as selecting a subset of available sensors to optimize a utility function, such as localization accuracy, while constraining the number of activated sensors, their sum energy consumption, or their distance to a centralized executive \cite{rowaihy:2007,bian:2006,luo:1989}. Most related to our work, \cite{hashemi:2018,hashemi:2017}, tackle a Kalman filtering problem with partial observations. The two papers study the problem of sensor selection to minimize the \gls{mse} regularized with a function that rewards balanced performance of individual sensors under a cardinality constraint. Our work is different in two main aspects. First, \cite{hashemi:2018,hashemi:2017} constrain the number of active sensors, which is a surrogate for constraining the total bandwidth required to transmit the sensor observations. Our work, however, sets a hard deadline for delivering the observations. Second, \cite{hashemi:2018,hashemi:2017} applies the standard Kalman filter formulation, whereas we apply a multirate formulation that is specifically tailored to the activation model.

\paragraph{Deadline-constrained scheduling} The deadline-constrained scheduling problem is defined as follows. A scheduler is presented with a combination of periodic and sporadic tasks and their interarrival times, execution times and deadlines, and its objective is to determine a time schedule for task execution so that no deadline miss occurs \cite{langer:2017,davis:2011}. The main difference between this paper and the scheduling literature is as follows. In the scheduling problem, the scheduler knows the tasks it needs to schedule. In our problem, however, the scheduler selects an optimal set of tasks and proceeds to serve them on a \gls{fcfs} basis; tasks in our paper are identified with packet transmissions. Prior work on scheduling deadline-constrained periodically-generated packets over wireless networks determines optimal schedules based on different criteria. For example, \cite{becchetti:2009} minimizes the energy consumption of transmitting packets, \cite{adamou:2001} maximizes network throughput, and \cite{hariharan:2011} maximizes the aggregated information. Our work, however, selects optimal tasks by minimizing the estimation \gls{mse}, a non-standard measure of service quality of a network.

{\color{black}
\paragraph{Age of Information} Introduced in \cite{kaul:2012,kaul:2012:2}, the \textit{age of information} (AoI) measures the freshness of real-time status updates of an evolving system or process. It is defined as the difference between the current time and the timestamp of the latest status update. This area is similar to our work in the setup. An observer measures and timestamps a state of interest (e.g. velocity, acceleration) and transmits it to the ``agency that monitors the system'' (executive) whose goal is to predict and control the system. Work on AoI, however, is vastly different from this work in that it uses the classical queueing theoretic framework to define the AoI metric. The central problem in work on AoI is to analyze and optimize (minimize) AoI for queueing systems with different arrival and departure process, service times, buffer capacities, number of servers, and service disciplines. On the contrary, we do not use the queueing framework but build our own instead.  Also, the problem that we address is minimizing the \gls{mse} of the state by selecting the optimal observation sequence. While recent work on AoI \cite{costa:2016,kam:2018} has introduced the notion of packet deadlines, the definition is quite different than ours. A deadline in this paper is an absolute time by which a packet needs to be delivered. A deadline according to \cite{costa:2016,kam:2018} is a timer per which a packet is dropped when it times out.
}

\paragraph{Resource allocation in \gls{urllc}} Our work is fundamentally different from existing work on resource allocation for low-latency communication services, such as 5G's \gls{urllc}, in the following way. We regard \gls{urllc} as catering to mission-critical applications that adhere to the structure of control and multistage decision processes. Recent work on \gls{urllc} studies resource allocation for downlink \gls{urllc} traffic \cite{anand:2018,she:2018:2}, for uplink and downlink \cite{she:2018}, and for \gls{urllc} and \gls{embb} traffic \cite{anand:2018:2,popovski:2018}. More recent work \cite{hou:2020,she:2020} has demonstrated that communications and control systems can be co-designed to reduce user-experienced delay in URLLC services. The scheduler allocates time-frequency resources to meet \glsdisp{qos}{QoS} requirements, e.g. outage probabilities and packet error rates. In our paper, however, these targets are irrelevant, and they are neither the object of minimization nor a constraint thereon. At the highest level, an application's success depends on it's fulfillment of the executive's control and decision making objectives. Accordingly, the reliability of the network that supports the application should not be conveyed by packet loss or error rate, but rather by the executive's knowledge about the state of the system that it controls.

\subsection{Organization and Notation}
The remainder of the paper is organized as follows. In \secref{s:b1}, the observation model is introduced. In \secref{s:b2}, the control system model is introduced, and the corresponding state transition and Kalman filter equations are derived. In \secref{s:b3}, the \gls{osp} is formulated. In \secref{s:b4} a solution algorithm is devised. In \secref{s:num}, key numerical results are presented.

In this paper, tuples are used extensively. The notation $\vs$ is used to denote a tuple of an arbitrary length. Subscripts are used to pick out elements of a tuple. For example, $s_\ell$ is the $\ell$th element of $\vs$, $\ell\geq 0$. When the length of $\vs$ is ambiguous, negative subscripts are used to pick out elements in reverse order. For example, $s_{-\ell}$ is the $\ell$th element from the end of $\vs$. As a tuple is also an ordered set, the length of $\vs$ is denoted by $|\vs|$, and membership is expressed as $u\in\vs$ if $u=s_\ell$ for some $\ell$. When the length of $\vs$ is ambiguous, it is  convenient at times to denote by $s_{|\vs|}$ the last element of $\vs$. The powerset of $\vs$, the set containing all subsequences of $\vs$, is $\powerset{\vs}$. Prepending and appending $u$ to $\vs$ is denoted by $(u,\vs)$ and $(\vs,u)$. Additionally, the following notation is used. Bold uppercase $\mxa$ denotes, bold lowercase $\va$ denotes a column vector, and non-bold lowercase $a$ and uppercase $A$ denote scalar quantities. Further, $\norm{\va}$ is the $\ell_2$ (Euclidean) norm of $\va$, and  $\transp{\va}$ is its transpose. The identity matrix is denoted by $\mxi$. The space of positive semi-definite matrices of dimension $L$ is denoted by $\mathbb{S}_+^L$. Finally, $\delta(t)$ is the Dirac delta function.

\section{System Model}\label{s:b1}
Having introduced the observer-executive-agent abstraction of control and decision processes, we model the relationship between the observations, actions, and internal state of the system. In the subsequent section, we use this model to derive state estimators from different observation sequences and compare their precision.

\begin{figure*}
\centering
\includegraphics[width=\textwidth]{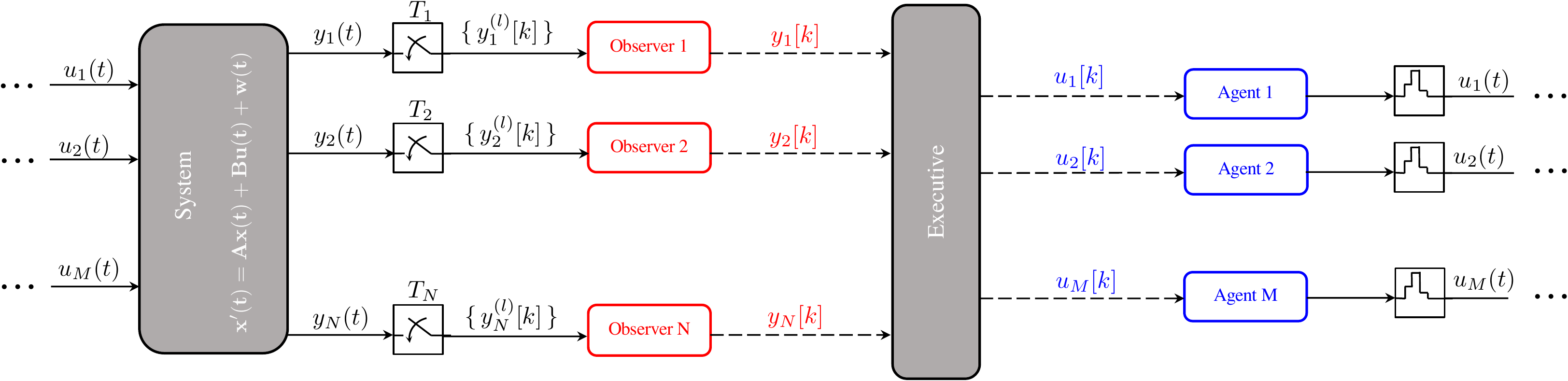}\label{fig:decision_loop}
\caption{The 3 components of a decision application: observers, agents, and an executive. These 3 entities communicate over a shared wireless medium. The executive observes the system through a set of observers and drives it through a set of agents.}
\end{figure*}

{\color{black}
\subsection{The Executive: Decision Maker and Scheduler}
The mathematical framework that will be developed in this paper fits any cellular or standard infrastructure-based wireless network where the executive oversees the scheduler, and any proprietary networked control system where the executive can be designed to assume any role. As such, the executive will fulfill the following roles: decision making (system controller) and scheduling (medium access controller). The decision making arm of the executive will run as an application layer process on a local server and assume full command over the executive's scheduling arm. The decision maker will instruct the scheduler to clear the medium for observers to send their representative observations one after the other. The scheduler will receive observation packets and pass them up the stack to the decision maker that will then determine the appropriate actions to be taken by the agents. It will encapsulate these actions into packets and pass them down to the scheduler to send them to the agents one after the other.
}

\subsection{Decision Cycle}
The executive performs four functions in chronological order in every decision cycle. First, it selects a subset of observations to harvest and determines a time schedule for the exchange of observation and action packets. Second, it harvests the observations, one observer at a time. Third, it uses these observations to estimate the current state of the system, decides on the future state and determines corrective actions to transition to that state. Fourth, it dispatches the actions, one agent at a time. We assume that all agents apply their actions \textit{isochronously} at decision cycle boundaries, i.e. simultaneously and at regular intervals. To apply new actions every cycle, the agents must receive these actions by the start of every cycle. As different observers may obtain their measurements at different scales or resolutions, we assume they sample the output of the system periodically but nonuniformly. Harvesting all observations might not leave enough time in the decision cycle to determine and dispatch actions. In that scenario, the executive can afford to harvest only a subset of the observations. Accordingly, we assume that the executive opts to select the set of observations that maximize its knowledge about the current state of the system. Technically, this is equivalent to minimizing the error between the true state and that perceived by the executive.

\subsection{Dynamical System}
Markov chains and differential (continuous-time) and difference (discrete-time) equations are the most common models to describe the time evolution of dynamical systems \cite{darling:2008}. While a Markov chain defines how a system evolves organically, a \gls{mdp} includes an \textit{agent}\footnote{In this paper, observers, executives, and agents refer to entities performing logically-distinct functions. In the \gls{mdp} context, however, these functions are performed by a single entity: the agent.} that regularly decides on the system trajectory based on its history \cite{puterman:book}. In this paper, we describe the dynamical system in state space form, which is a set of first-order differential/difference equations relating the system state to its inputs and outputs. This allows us to pose the problem of state estimation from streaming observations as a Kalman filtering problem which is more tractable. We defer the \gls{mdp} model to future work.


The executive observes the system through $N$ observers and interacts with it through $M$ agents on a cycle-by-cycle basis. We define $\vx(t)$, $\vu(t)$ and $\vy(t)$ to be the state, input and output of the system at time $t$, and $y_n(t)$ the scalar output that is observable by observer $n$, $1\leq n\leq N$. 
We consider a \gls{lti} continuous-time dynamical system represented in state-space form as
\begin{align}
\label{eq:ss_state}
\bx'(t) &= \baa \bx(t) + \bbb \bu(t) + \bv(t), \\
\label{eq:ss_output}
\by(t)  &= \bcc \bx(t) + \bw(t),
\end{align}
where $\displaystyle \bx'(t)\triangleq \diff \bx(t)/\diff t$, $\baa\in\mathbb{R}^{S\times S}$ is the state transition matrix, $\bbb\in\mathbb{R}^{S\times M}$ is the control/input model, $\bcc\in\mathbb{R}^{N\times S}$ is the observation/measurement matrix, $\bv(t)\in\mathbb{R}^{S}$ is the white Gaussian process uncertainty with covariance $\bqq(t)\delta(t)$, and $\bw(t)\in\mathbb{R}^{N}$ is the white Gaussian observation noise with covariance $\brr(t)\delta(t)$.

\subsection{Periodic Observations and Actions}
We define $T$ be the \textit{decision period}, i.e. the period of the decision cycle during which the executive must complete all 4 tasks. We define $T_n$ to be the \textit{observation period} of observer $n$, i.e. the period at which observer $n$ samples $y_n(t)$, the output observable by observer $n$. 
{\color{black}
In a decision cycle, every observer produces a sequence of one or more observations. These observations do not add information to one another in the way that a sequence of video packets together construct a video frame. Instead, they overtake one another like status updates: once the latest update is available, former updates become obsolete; hence, only the latest observation of an observer that can be transferred should be transferred.  
}

To make the notation more compact, we define the variable $\myy{n}{\ell}{k}$ to be observation $\ell$ of observer $n$ in decision cycle $k$, and the variable $\myt{n}{\ell}{k}$ to be the timestamp of this observation. The relationship between these two variables is
\begin{align}\label{eq:obs}
\myy{n}{\ell}{k}=y_n\left(\myt{n}{\ell}{k}\right). 
\end{align}

{\color{black}
The number of sequences the executive can use to form its belief about the state of the system grows exponentially with the number of observations available in the decision cycle. Allowing for multiple observations per observer per cycle, the total number of observation sequences is
\begin{align}
2^{\sum_{n=1}^N \ceil{\frac{T}{T_n}}}.
\end{align}
If every observer is limited to at most one observation per cycle, the number of observations is at most $2^N$ which is a significant reduction in the size of the search space of observation sequences. Indeed, we assume that every observer can report at most one \emph{representative observation} every decision cycle $k$. 
}
We denote by $y_n[k]$ and $t_n[k]$ the representative observation of observer $n$ in decision cycle $k$ and its timestamp. This observation could be defined systematically, arbitrarily, or randomly. An obvious choice for an observer's representative observation is its last observations in a decision cycle. On the one hand, the last observation reflects a state closer in time to that at the end of the decision cycle. On the other hand, sending that observation could leave little time to make a decision on the actions to be applied at the start of the next cycle. An alternative choice is an observer's first observation in a decision cycle. While the first observation reflects a state farther in time from that at the end of the decision cycle, sending this observation leaves ample time to make a decision and dispatch actions.
{\color{black}
Still, this does not fully justify harvesting an observation that reflects an early state of the system from which the current state may have sufficiently diverged. We shall assume that the state or process variables that are sampled by the observers do not change substantially over the decision cycle, i.e. the variables have a ``coherence time'' less than the decision period. If the process variables that are critical to decision making accuracy do vary rapidly within a decision cycle, then the choice of the decision period ought to be reconsidered. If this is not possible, then more bandwidth should be recruited to allow for on-time delivery of observations produced late into the cycle. 
}
With that, we assume that an observer's representative observation in a decision cycle is its first observation in that cycle. We group these representative observations for every cycle $k$ into the observation vector 
\begin{align}
\by[k] = \big[\; y_0[k] \;\; y_1[k] \;\; \cdots \;\; y_{N-1}[k] \;\big]^\sft.
\end{align}

%
%
%
%

\definecolor{myred}{RGB}{255, 102, 102}
\definecolor{mygray}{RGB}{122, 122, 122}
\definecolor{myblue}{RGB}{102, 179, 255}


\begin{figure*}[t]
\centering
\begin{tikzpicture}[xscale=1,transform shape]
\def\T{7};
\def\tau{3}

\draw [-latex'](-0.5,0) -- (0,0)coordinate(o) -- (2*\T+2,0)coordinate(tmax) node[above]{$t$};
\draw[thick] (0,0.2) -- (0,-0.2) node[below ]{$(k-1)T$};
\draw[thick] (\T,0.2) -- (\T,-0.2) node[below ]{$kT$};
\draw[thick] (2*\T,0.2) -- (2*\T,-0.2) node[below ]{$(k+1)T$};

\draw[domain=-0.5:2*\T+2,smooth,variable=\x,red,thick] plot ({\x},{2+4*exp(-0.2*\x)*cos(30*\x)});
\foreach \k in {0,1,...,5}
{
	\pgfmathsetmacro{\x}{0.3+\k*\tau};
	\coordinate  (f-\k) at (\x,{2+4*exp(-0.2*\x)*cos(30*\x)});
	\draw[-latex,thick,red] (f-\k) --(\x,0);
	\node at (f-\k)[color=red,circle,fill,inner sep=1.5pt]{};
}

\node at (0.3,5.7) [above right]{\color{red} $y_n^{(0)}[k]$};
\node at (3,1.9) [above right]{\color{red} $y_n^{(1)}[k]$};
\node at (5.5,1.0) [above right]{\color{red} $y_n^{(2)}[k]$};
\node at (7.9,2.2) [above right]{\color{red} $y_n^{(0)}[k+1]$};
\node at (11,2.4) [above right]{\color{red} $y_n^{(1)}[k+1]$};
\node at (14,2.1) [above right]{\color{red} $y_n^{(0)}[k+2]$};

\end{tikzpicture}	
\caption{The output of the system $y_n(t)$ that is observable by observer $n$. The observer periodically samples this output to obtain a sequence of observations $\set{\myy{n}{l}{j}}$.}
\end{figure*}
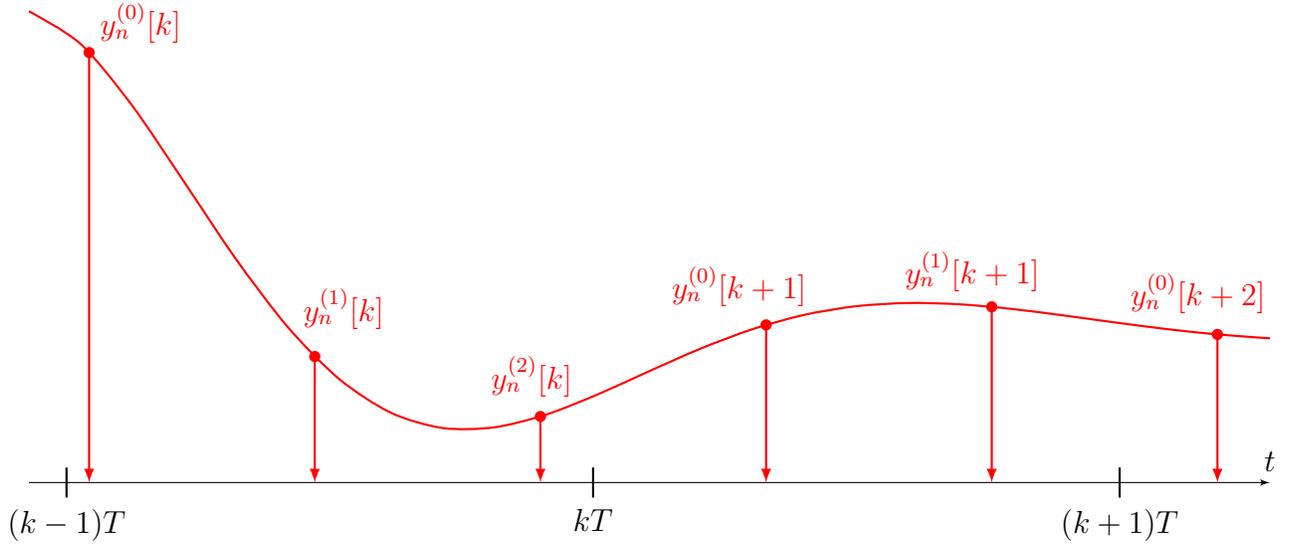




{\color{black}
An observer's sampling rate should be chosen by the executive to be commensurate with the observer's value. Too small, and the observer and its observations will be considered essential to building the executive's cycle-to-cycle belief of the state of the system. Too large, and the observer will be non-essential and might as well be dropped.
}
Comparing $T_n$ with $T$, we can determine the number of observations in decision cycle $kT$ as well as their timestamps. We distinguish between 3 cases, (a) $T_n=T$, (b) $T_n\leq T$, and (c) $T_n>T$. The following proposition gives the timestamp of the first observation for these three cases.

\begin{proposition}\label{p:observation_timestamps}
When $T_n=T$, the timestamp of the first observation in cycle $k$ is
\begin{align}
\myt{n}{0}{k} &= (k-1)T.
\intertext{When $T_n\leq T$,}
\myt{n}{0}{k} &= \ceil{\frac{(k-1)T}{T_n}} T_n.
\intertext{When $T_n>T$,}
\myt{n}{0}{k} &= \floor{\frac{kT}{T_n}} T_n
\intertext{if $kT - \floor{\frac{kT}{T_n}}\, T_n \leq T$. Otherwise, observer $n$ produces no observations that cycle.}
\end{align}
\end{proposition}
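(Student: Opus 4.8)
The plan is to model observer $n$ as sampling the output on the regular grid of instants $\{m T_n : m\in\mathbb{Z}_{\geq 0}\}$, synchronized to the global time origin, and to recognize that the first observation in cycle $k$ is simply the earliest grid point falling inside the cycle. Reading off the intended convention from case (a) — where $T_n=T$ yields $\myt{n}{0}{k}=(k-1)T$ — the cycle occupies the half-open interval $[(k-1)T,\,kT)$, so I would write
\[
\myt{n}{0}{k} = \min\{\, m T_n : m\in\mathbb{Z}_{\geq 0},\ (k-1)T \leq m T_n < kT \,\}.
\]
The entire statement then reduces to locating the smallest integer multiple of $T_n$ lying in a half-open interval of width $T$, and to deciding when such a multiple exists at all. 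The three cases correspond exactly to how the grid spacing $T_n$ compares with the interval width $T$.

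For the regime $T_n\leq T$ (which subsumes $T_n=T$), I would first identify the smallest integer $m$ with $m T_n \geq (k-1)T$; by the definition of the ceiling this is $m=\ceil{(k-1)T/T_n}$, giving the candidate $\ceil{(k-1)T/T_n}\,T_n$. The only thing left is to confirm that this candidate does not overshoot the right endpoint, i.e. that it is strictly below $kT$. This follows from the standard ceiling bound $m T_n < (k-1)T + T_n$ together with $T_n\leq T$, which forces $m T_n < (k-1)T + T = kT$. Hence a sample always exists in this regime and the formula holds; specializing to $T_n=T$ collapses the ceiling to $k-1$ and recovers $\myt{n}{0}{k}=(k-1)T$, so (a) is just the boundary instance of (b).

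The regime $T_n>T$ is where the real work lies. Here consecutive grid points are more than one cycle apart, so each cycle contains \emph{at most} one sample and existence is no longer automatic. I would characterize the single candidate as the largest multiple of $T_n$ not exceeding $kT$, namely $\floor{kT/T_n}\,T_n$, which by the floor bounds satisfies $kT - T_n < \floor{kT/T_n}\,T_n \leq kT$. Whether this candidate belongs to the cycle hinges on its lower end: it lies in $[(k-1)T,\,kT]$ precisely when $\floor{kT/T_n}\,T_n \geq (k-1)T$, which rearranges to the stated condition $kT - \floor{kT/T_n}\,T_n \leq T$. When this holds the candidate is the sought observation; when it fails, the largest multiple at or below $kT$ already precedes $(k-1)T$, while the next multiple $(\floor{kT/T_n}+1)\,T_n$ exceeds $kT$ by the floor definition, so no grid point meets the cycle and the observer stays silent. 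I expect the main obstacle to be the endpoint bookkeeping — reconciling the half-open convention $[(k-1)T,\,kT)$ with the non-strict inequality $\leq T$ in the statement — and supplying this short non-existence argument, since the ``otherwise'' clause in (c) is the only part that is not a direct floor/ceiling identity.
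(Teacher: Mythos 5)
Your proof is correct. For the regime $T_n > T$ it is essentially the paper's own argument: the paper introduces the offset $\delta_n[k] = kT - \floor{kT/T_n}\,T_n$ and asserts that cycle $k$ contains an observation at $kT-\delta_n[k]$ precisely when $\delta_n[k]\leq T$, which is the same floor-based membership test you derive; your explicit non-existence argument (the next multiple $(\floor{kT/T_n}+1)T_n$ already exceeds $kT$) is only implicit there. Where you genuinely diverge is the case $T_n\leq T$: the paper starts from the \emph{last} observation of the cycle, at $kT-\delta_n[k]$, solves $kT-\delta_n[k]-mT_n\geq (k-1)T$ for the largest integer $m$, and then needs a floor/ceiling manipulation ($m=\floor{kT/T_n}-\ceil{(k-1)T/T_n}$) to collapse the result into $\ceil{(k-1)T/T_n}\,T_n$; you instead characterize the first observation directly as the smallest multiple of $T_n$ at or above $(k-1)T$, read off $m=\ceil{(k-1)T/T_n}$ from the definition of the ceiling, and verify it does not overshoot via $mT_n < (k-1)T+T_n \leq kT$. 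Your route is shorter and more elementary, makes explicit the in-cycle verification that the paper's backward derivation obtains for free, and cleanly subsumes case (a) as the boundary instance of case (b), which the paper dispatches with a separate one-line remark; the paper's route has the minor advantage of producing the last observation's timestamp $kT-\delta_n[k]$ along the way. Your observation about endpoint bookkeeping is also apt: the statement's cases (a)--(b) are consistent with the half-open convention $[(k-1)T,\,kT)$, while the non-strict condition $\delta_n[k]\leq T$ in case (c) admits $\delta_n[k]=0$, i.e.\ a sample at the right endpoint $kT$ — a convention mismatch the paper adopts silently and that your proof at least names.
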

\begin{proof}
See Appendix~\ref{app:b}.
\end{proof}

The count, order, and identity of observations from any observer in any decision cycle can be determined offline. In \secref{s:b3}, we will see that an observation's timestamp plays a critical role in the executive's decision to select that observation.


We have described how the sampled observations for observer $n$, $\set{\myy{n}{\ell}{k}}$, are produced from the continuous-time system outputs $\set{y_n(t)}$. We now describe how the system inputs are constructed from the agents' actions. In real-time control systems, all of the system inputs are refreshed isochronously at the start of every cycle. Accordingly, we model the continuous-time inputs $\bu(t)$ as a \gls{zoh} of the discrete-time actions $\bu[k]$. The \gls{zoh} holds the values of the inputs steady for the entire duration of the decision cycle \cite{ogata:book}. Mathematically, $\bu(t)$ is reconstructed from $\bu[k]$ as
\begin{align}
u(t) = u\left[\floor{\frac{t}{T}}\right].
\end{align}

There are four key points to take away from this section. First, an observer produces a different number of observations in different decision cycles, which depends on the relative values of the observation and decision periods. Second, every observer can report at most one observation in a decision cycle, which is that observer's representative observation for that cycle. Third, the timestamp of every representative observation in any cycle can be computed offline. Fourth, while observers sample the output of the system at different rates, actions are applied isochronously at decision cycle boundaries.

\section{State Estimation}\label{s:b2}
In this section, we explain how the executive maximizes its belief about the system state from streaming observations and past actions. We first derive the state transition equations that relate the state of the system at two different times, and then we use the Bayesian framework to derive the optimal state estimator used by the executive, i.e. the Kalman filter.

\subsection{State Transition Equations}
The relationship between the state of the system at two different time stamps is key to state estimation at a time of interest from a previous state.
The actions that are applied at the decision cycle boundary (time $kT$) should account for the state of the system at that boundary. The reason is that the evolution of the system beyond time $kT$ is determined by the values of the state and input at time $kT$ (see \eqref{eq:ss_state}). The executive, however, can only use observations produced well ahead of decision cycle boundaries; it takes time to harvest and process observations and to decide on and dispatch actions. Accordingly, the executive needs to \emph{predict} the state at decision cycle boundaries. This prediction entails relating observations with arbitrary timestamps to the state at cycle boundaries. 
Let $t[k]$ be an arbitrary time in cycle $k$, i.e. $(k-1)T \leq t[k] \leq kT$. In the next proposition, we express the state at time $t[\ell]$ as a function of the state at an earlier time $t[k]$. 
\begin{proposition}[State transition equation]\label{p:state_transition_eq}
The state at an arbitrary time $t[\ell]$ is related to that at an earlier time $t[k]$ according to the following equation:
\begin{equation}\label{eq:state_update}
\begin{alignedat}{2}
\vx \Big(t[\ell]\Big)	
&= \mxphi\Big(t[k],t[\ell])\Big) \vx(t[k])
&&+ \mxlambda\Big(t[k],(k+1)T,t[\ell]\Big)\bu[k] \\
& &&+ \sum_{j=k+1}^{\ell-1} \mxlambda\Big(jT,(j+1)T,t[\ell]\Big)\bu[j] \\
& &&+ \mxlambda\Big(lT,t[\ell],t[\ell]\Big)\bu[\ell] + \vnu\Big(t[k],t[\ell]\Big), \\
\end{alignedat}
\end{equation}
where
\begin{align}
\mxphi(s,t) &= e^{\mxa(t-s)}, \\
\mxlambda(r,s,t) &= \mxa^{-1}e^{\mxa (t-s)}\left(e^{\mxa (s-r)}-\mxi\right) \mxb, \\
\vnu(s,t) &= \int_{s}^{t} e^{\mxa(t-\tau)} \vv(\tau) \diff\tau.
\end{align}
The matrix $\mxphi(s,t)$ is known as the state transition matrix between time instances $s$ and $t$, $s\leq t$.  The vector $\vnu(s,t)$ is Gaussian with a mean of $\vzero$ and a covariance
\begin{align}
\mxq\Big(t[k],t[\ell]\Big)
= \int\limits_{0}^{t[\ell]-t[k]} e^{\mxa s} \mxq e^{\mxa^* s} \diff s.
\end{align}
\end{proposition}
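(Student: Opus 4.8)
The plan is to treat \eqref{eq:ss_state} as a linear inhomogeneous ODE and integrate it forward from $t[k]$ to $t[\ell]$ by the standard variation-of-parameters (matrix-exponential) formula, then exploit the piecewise-constant structure of the \gls{zoh} input to split the forcing integral at cycle boundaries. Concretely, multiplying \eqref{eq:ss_state} by the integrating factor $e^{-\mxa t}$ and integrating over $[t[k],t[\ell]]$ gives
\begin{align*}
\vx\Big(t[\ell]\Big) = e^{\mxa(t[\ell]-t[k])}\vx\big(t[k]\big) + \int_{t[k]}^{t[\ell]} e^{\mxa(t[\ell]-\tau)}\Big(\mxb\,\bu(\tau) + \vv(\tau)\Big)\diff\tau.
\end{align*}
The leading term is exactly $\mxphi(t[k],t[\ell])\vx(t[k])$, and the portion of the integral carrying $\vv$ is by definition $\vnu(t[k],t[\ell])$, so what remains is to evaluate the deterministic forcing integral and to characterize the statistics of $\vnu$.

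For the forcing integral, the crucial observation is that $\bu(\tau)$ is held constant over each decision cycle: by the reconstruction $\bu(\tau) = \bu[\lfloor \tau/T\rfloor]$, the input equals $\bu[k]$ on $[t[k],(k+1)T)$, equals $\bu[j]$ on each full cycle $[jT,(j+1)T)$ for $k+1 \le j \le \ell-1$, and equals $\bu[\ell]$ on $[\ell T, t[\ell]]$. I would therefore partition $[t[k],t[\ell]]$ at the boundaries $(k+1)T,\dots,\ell T$, pull the constant input out of each sub-integral, and apply the elementary identity
\begin{align*}
\int_{a}^{b} e^{\mxa(t[\ell]-\tau)}\diff\tau = \mxa^{-1}\Big(e^{\mxa(t[\ell]-a)} - e^{\mxa(t[\ell]-b)}\Big) = \mxa^{-1} e^{\mxa(t[\ell]-b)}\Big(e^{\mxa(b-a)} - \mxi\Big),
\end{align*}
which holds whenever $\mxa$ is invertible. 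Matching $(a,b)$ to $(t[k],(k+1)T)$, to $(jT,(j+1)T)$, and to $(\ell T, t[\ell])$, and recalling that $\mxlambda(r,s,t) = \mxa^{-1}e^{\mxa(t-s)}(e^{\mxa(s-r)}-\mxi)\mxb$, reproduces precisely the three families of terms $\mxlambda(t[k],(k+1)T,t[\ell])\bu[k]$, $\mxlambda(jT,(j+1)T,t[\ell])\bu[j]$, and $\mxlambda(\ell T,t[\ell],t[\ell])\bu[\ell]$.

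It remains to establish the distribution of $\vnu(t[k],t[\ell]) = \int_{t[k]}^{t[\ell]} e^{\mxa(t[\ell]-\tau)}\vv(\tau)\diff\tau$. As this is a deterministic linear functional of the zero-mean Gaussian white process $\vv$, it is Gaussian with mean $\vzero$. For the covariance I would write it as a double integral over $\tau$ and $\sigma$ and invoke the whiteness $\expect{\vv(\tau)\transp{\vv(\sigma)}} = \mxq\,\delta(\tau-\sigma)$ to collapse one integration, yielding $\int_{t[k]}^{t[\ell]} e^{\mxa(t[\ell]-\tau)}\mxq\, e^{\mxa^{*}(t[\ell]-\tau)}\diff\tau$; the substitution $s = t[\ell]-\tau$ then produces the stated $\mxq(t[k],t[\ell]) = \int_{0}^{t[\ell]-t[k]} e^{\mxa s}\mxq\, e^{\mxa^{*}s}\diff s$. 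I expect the main obstacle to be purely bookkeeping rather than analytic: correctly aligning each piecewise-constant input segment with the right boundary times and index range so that the $\mxlambda$ arguments come out as claimed, in particular handling the two partial-cycle segments $[t[k],(k+1)T]$ and $[\ell T, t[\ell]]$ that bracket the sum. The invertibility of $\mxa$ implicit in the $\mxlambda$ formula should also be flagged as a standing assumption.
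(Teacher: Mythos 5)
Your proof is correct and takes essentially the same route as the paper's own derivation (Appendix~\ref{app:a}, following \cite{ogata:book}): the variation-of-parameters solution of \eqref{eq:ss_state} between $t[k]$ and $t[\ell]$, partition of the forcing integral at the zero-order-hold cycle boundaries with the elementary identity $\int_a^b e^{\mxa(t[\ell]-\tau)}\,\diff\tau = \mxa^{-1}e^{\mxa(t[\ell]-b)}\left(e^{\mxa(b-a)}-\mxi\right)$, and the standard white-noise covariance collapse followed by the substitution $s=t[\ell]-\tau$. Your explicit flagging of the invertibility of $\mxa$ as a standing assumption is a detail the paper leaves implicit, but it does not alter the argument.
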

\begin{proof}
Similar to Chapter 5 Section 5 of \cite{ogata:book}.
\end{proof}

\begin{figure*}
\centering
\subfloat[Inter-cycle transition]{\label{fig:x2x_inter}
\begin{tikzpicture}[>=stealth']
\def\T{5}
\def\L{14}
\draw[->] (0,0) -- (\L,0);
\node at (\L,0) [label=below:$t$] {};

\coordinate (k1) at (0.5,0);
\coordinate (k2) at ($(k1) + (\T,0)$);
\coordinate (l1) at ($(k2) + (2,0)$);
\coordinate (l2) at ($(l1) + (\T,0)$);
\coordinate (t1) at ($(k1) + (1,0)$);
\coordinate (t2) at ($(l2) - (2.5,0)$);

\foreach \n in {k1,k2,l1,l2}
\draw ($(\n)-(0,0.2)$) -- ($(\n)+(0,0.2)$);
\node at (k1) [label=below left:$(k-1)T$] {};
\node at (k2) [label=below:$kT$] {};
\node at (l1) [label=below:$(\ell-1)T$] {};
\node at (l2) [label=below:$\ell T$] {};

\foreach \n in {t1,t2}
	\node at (\n)[blue,circle,fill,inner sep=1.5pt]{};
\node (1) at (t1) [label=below:{\color{blue} $t[k]$}] {};
\node (2) at (t2) [label=below:{\color{blue} $t[\ell]$}] {};
\draw[->,blue] (1) to [out=15,in=165] (2);

\end{tikzpicture}
}

\subfloat[Edge-to-edge transition]{\label{fig:x2x_e2e}
\begin{tikzpicture}[>=stealth']
\def\T{5}
\def\L{14}
\draw[->] (0,0) -- (\L,0);
\node at (\L,0) [label=below:$t$] {};

\coordinate (k1) at (0.5,0);
\coordinate (k2) at ($(k1) + (\T,0)$);
\draw ($(k1)-(0,0.2)$) -- ($(k1)+(0,0.2)$);
\draw ($(k2)-(0,0.2)$) -- ($(k2)+(0,0.2)$);
\node (1) at (k1) [label=below left: {\color{blue}  $(k-1)T$} ] {};
\node (2) at (k2) [label=below: {\color{blue}  $kT$} ] {};

\foreach \n in {k1,k2}
	\node at (\n)[blue,circle,fill,inner sep=1.5pt]{};
	
\draw[->,blue] (1) to [out=15,in=165] (2); 

\end{tikzpicture}
}

\subfloat[Intra-cycle transition]{\label{fig:x2x_intra}
\begin{tikzpicture}[>=stealth']
\def\T{5}
\def\L{14}
\draw[->] (0,0) -- (\L,0);
\node at (\L,0) [label=below:$t$] {};

\coordinate (k1) at (0.5,0);
\coordinate (k2) at ($(k1) + (\T,0)$);
\coordinate (t1) at ($(k1) + (0.5,0)$);
\coordinate (t2) at ($(k2) - (1.5,0)$);

\draw ($(k1)-(0,0.2)$) -- ($(k1)+(0,0.2)$);
\draw ($(k2)-(0,0.2)$) -- ($(k2)+(0,0.2)$);
\node at (k1) [label=below left:$(k-1)T$] {};
\node at (k2) [label=below:$kT$] {};

\foreach \n in {t1,t2}
	\node at (\n)[blue,circle,fill,inner sep=1.5pt]{};
\node (1) at (t1) [label=below: {\color{blue} $\myt{}{1}{k}$} ] {};
\node (2) at (t2) [label=below: {\color{blue} $\myt{}{2}{k}$} ] {};
\draw[->,blue] (1) to [out=15,in=165] (2); 

\end{tikzpicture}
}
\caption{The state of the system $\vx(t)$ at an arbitrary time as computed from an earlier state $\vx(s)$, $s<t$; 2 special cases are highlighted. In part (a), the state at a desired time instance is computed from an earlier state in an earlier cycle. In part (b), the state at the cycle's right edge is compute from the state at the cycle's left edge. In part (c), the state is computed from an earlier state in the same cycle.}
\label{fig:kalman_step}
\end{figure*}
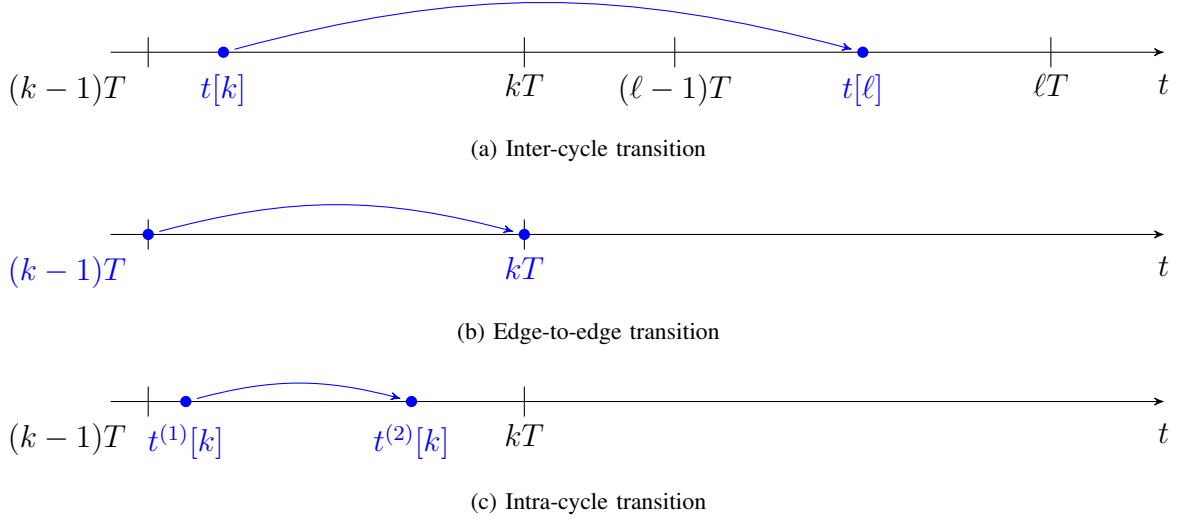

\newcommand{\mytw}{{\myt{}{1}{k}}}
\newcommand{\mytt}{{\myt{}{2}{k}}}
Due to observation noise and system uncertainty, the executive can only estimate the state of the system using observations it collects. As the underlying system is linear and observation and system noise white, it uses a Kalman filter to refine its estimates with streaming observations, which uses \eqref{eq:state_update} to predict the state an arbitrary time $t$ from an earlier estimate at time $s$. We look at two special cases: (1) when the 2 time instances are from the same decision cycle, and (2) when the 2 instances are the edges of a cycle. We denote these 2 time instances by $\mytw$ and $\mytt$, $\mytw\leq \mytt$. In the second case, the 2 instances are equal to $kT$ and $(k+1)T$.
For \textit{inter-cycle transitions}, the state transition equation reduces to
\begin{align}
\begin{split}
\vx\Big(\mytt\Big) &= \\
\mxphi\Big(\mytw,\mytt\Big)\vx\Big(\mytw\Big) &+ \mxlambda\Big(\mytw,\mytt,\mytt\Big)\vu[k] + \vnu\Big(\mytw,\mytt\Big).
\end{split}
\end{align}
For \textit{edge-to-edge transitions}, the state transition equation reduces to
\begin{align}
\begin{split}
\vx\Big((k+1)T\Big) &=  \\
\mxphi\Big(kT,(k+1)T\Big)\vx\Big(kT\Big) &+ \mxlambda\Big(kT,(k+1)T,(k+1)T\Big) \vu[k] + \vnu\Big(kT,(k+1)T\Big).
\end{split}
\end{align}
\figref{fig:kalman_step} visualizes these three cases.

\subsection{Kalman Filtering with Two Observers}\label{s:b2_2}


Maximizing the knowledge of the state of the system is equivalent to maximizing the precision of the state estimator, i.e. minimizing its \gls{mse}. For a linear system, the executive uses a Kalman filter to determine the optimal state estimate from streaming observations. We describe an example that evaluates different state estimators that rely on different numbers of observations. In this example, we assume that there are $N=2$ observers. These two observers could be a temperature and a pressure sensor, a GPS and a gyroscope signal, or even a pulse sensor and a blood glucose monitor \cite{yilmaz:2010}. Consider an arbitrary decision cycle $k$ in which the two observers produce an observation each at $t_1[k]$ and $t_2[k]$, such that $t_1[k] \leq t_2[k]$. Suppressing the cycle index $k$ and combining \eqref{eq:obs} and \eqref{eq:ss_output}, we express the 2 observations as
\begin{align}
\begin{split}
y_1 = \vc_1^\transpp \vx(t_1) + \ve_1^\transpp \vw(t_1) \\
y_2 = \vc_2^\transpp \vx(t_2) + \ve_2^\transpp \vw(t_2) 
\end{split}
\end{align} 
where $\vc_1^\transpp$ and $\vc_2^\transpp$ are the row vectors of the observation matrix $\mxc$, and $\ve_1$ and $\ve_2$ are the standard basis (column) vectors of the Euclidean plane. For notational simplicity, define $r_1$ and $r_2$ to be the variances of $\ve_1^\transpp \vw(t_1)$ and $\ve_2^\transpp \vw(t_2)$.

We assess the estimates of $\vx(kT)$ from different observation combinations: a) observation $1$ only, b) observation $2$ only, c) observations $1$ and $2$, and d) no observation. To derive these estimates, we use standard definitions from Kalman filtering: $s<t$ are two time instances; $\vxh(t|s)$ is the predicted (a priori) state estimate at time $t$ given observations prior to and including time $s$; $\vxh(t|t)$ is the filtered (a posteriori) state estimate given observations up to and including time $t$; $\mxp(t|s)$ is the a priori error covariance matrix, where the error is difference between the true state and the estimated state; $\mxp(t|t)$ is the predicted a posteriori covariance matrix; $\ve(t|s)$ is the variance of \textit{innovation} at time $t$, i.e. the difference between $\vy(t)$ and its estimate from observations prior to and including time $s$. Suppose that a state estimate $\vxh(t_0|t_0)$ and its error covariance matrix $\mxp(t_0|t_0)$ are available from an earlier cycle.

\paragraph{Only observation $1$ is harvested} The executive chooses to harvest only the observation $y_1$. Accordingly, the standard \gls{kf} predict and update steps at $t_1$ from prior observations are
\begin{align}
\mxp(t_1|t_0) &= \mxphi(t_0,t_1) \mxp(t_0|t_0)\mxphi^\transpp(t_0,t_1) +\mxq(t_0,t_1) \label{eq:kf_t1_start}\\
\ve(t_1|t_0) &= \vc_1^\transpp \mxp(t_1|t_0) \vc_1 + r_1 \\
\mxp(t_1|t_1) &= \mxp(t_1|t_0) - \frac{1}{\ve(t_1|t_0)}\mxp(t_1|t_0)\vc_1\vc_1^\transpp\mxp(t_1|t_0). \label{eq:kf_t1_stop}
\end{align}
To characterize how the covariance matrix of the predicted state at time $kT$ is refined by observations made at earlier times, we define the functions $h_i$, $h_{i,j}$ and $g_{i,j}: \mathbb{S}^L_+ \rightarrow \mathbb{S}^L_+$ as
\begin{align}
h_i(\mxp) &\triangleq \mxphi(t_i,kT) \,\mxp\, \mxphi^\transpp(t_i,kT) +\mxq(t_i,kT)\\
h_{i,j}(\mxp) &\triangleq \mxphi(t_i,t_j) \,\mxp\, \mxphi^\transpp(t_i,t_j) +\mxq(t_i,t_j)\\
g_{i,j}(\mxp) &\triangleq h_{i,j}(\mxp)
\left(\mxi-\frac{1}{\vc_j^\transpp h_{i,j}(\mxp)\vc_j + r_j}\vc_j\vc_j^\transpp h_{i,j}(\mxp) \right).
\end{align} 
The function $h_{i,j}$ computes the covariance matrix of the \emph{predicted} state at $t_j$ from the covariance matrix $\mxp(t_i|t_i)$ of the filtered estimate at $t_i$ from all observations up to and including $t_i$ (a posteriori to a priori). Setting $t_j$ to be the end of the current decision cycle $kT$, the function $h_{i,j}$ reduces to the function $h_i$. The function $g_{i,j}$ computes the covariance matrix of the \emph{filtered} state at $t_j$ from the covariance matrix $\mxp(t_i|t_i)$ of the filtered estimate at $t_i$ from all observations up to and including $t_i$ (a posteriori to a posteriori). We can relate the aposteriori error covariance matrices $\mxp(t_1|t_1)$ to $\mxp(t_0|t_0)$ according to \cite{shi:2010} as
\begin{align}
\mxp(t_1|t_1) &= g_{0,1}\big(\mxp(t_0|t_0)\big),
\intertext{and $\mxp(kT|t_1)$ to $\mxp(t_1|t_1)$ as}
\mxp(kT|t_1) &= h_1\big(\mxp(t_1|t_1)\big).
\end{align}
Finally, the estimation mean square error when only observation $1$ is used is computed as
\begin{align}\label{eq:mse_1}
\mse_{(1)} = \tr{\mxp(kT|t_1)} = \tr{h_1\left(g_{0,1}\left(\mxp(t_0|t_0)\right)\right)},
\end{align}
where the subscript $(1)$ denotes the 1-tuple (singleton) of the element $1$. We use tuples instead of sets as an index to maintain an order between their elements.

\paragraph{Only observation $2$ is harvested} When only observation $2$ is used is
\begin{align} \label{eq:mse_2}
\mse_{(2)} 
= \tr{\mxp(kT|t_2)} 
= \tr{h_2 \left(g_{0,2}\left(\mxp(t_0|t_0)\right)\right)}.
\end{align}

\paragraph{Both observation $1$ and $2$ harvested} In this case, there are two \gls{kf} update steps. The first update step, at $t_1$, is identical to \eqref{eq:kf_t1_start}--\eqref{eq:kf_t1_stop}. The second update step, at time $t_2$, is
\begin{align}
\mxp(t_2|t_1) &= \mxphi(t_1,t_2) \mxp(t_1|t_1)\mxphi^\transpp (t_1,t_2) +\mxq(t_1,t_2) \\
\mxs(t_2|t_1) &= \vc_2^\transpp \mxp(t_2|t_1) \vc_2 + r_2 \\
\mxp(t_2|t_2) &= \mxp(t_2|t_1) - \frac{1}{\ve(t_2|t_1)}\mxp(t_2|t_1)\vc_2\vc_2^\transpp \mxp(t_2|t_1) \\
&= g_{1,2}\big(  g_{0,1}\big(\mxp(t_0|t_0)\big)  \big).
\end{align}
Using the function composition operator, $\circ$, the estimation mean square error when both observations are used is
\begin{align}\label{eq:mse_12}
\begin{split}
\mse_{(1,2)} 
&=  \big( \trno \circ h_2 \circ g_{1,2}\big) \big(\mxp(t_1|t_1)\big) \\
&= \big( \trno \circ h_2 \circ g_{1,2} \circ g_{0,1} \big) \big(\mxp(t_0|t_0)\big).
\end{split}
\end{align}

\paragraph{No observation is harvested} Since there are no updates, the mean square error is exactly what it was at time $t_0$, i.e.
\begin{align}\label{eq:mse_empty}
\mse_{\emptyset} = \tr{h_0\big(\mxp(t_0|t_0)\big)},
\end{align}
where $\emptyset$ denotes the empty tuple.

With no constraints limiting the choice of the observation sequence, the executive chooses the one sequence out of the four with the minimum \gls{mse}. This is the \textit{optimal observation sequence}. For convenience, we assign variables to the error covariance matrices of the running aposteriori estimates from different observation sequences
\begin{align}
\begin{split}
\mxp_{\emptyset} &= \mxp(t_0|t_0) \\
\mxp_{(1)} &= g_{0,1}(\mxp_{\emptyset}) \\
\mxp_{(2)} &= g_{0,2}(\mxp_{\emptyset}) \\
\mxp_{(1,2)} &= g_{1,2}  (\mxp_{(1)}).
\end{split}
\end{align}
We refer to $\mxp_\vs$ as the \textit{running error covariance matrix} from the observation sequence $\vs$. In summary, \figref{fig:kf_example} visualizes the refinement of the initial covariance matrix $\mxp(t_0|t_0)$ with anywhere from zero to two observations. A caveat here is that \figref{fig:kf_example} shows how the \glspl{mse} of the four different estimators is computed; it does not show how the actual estimates are computed.

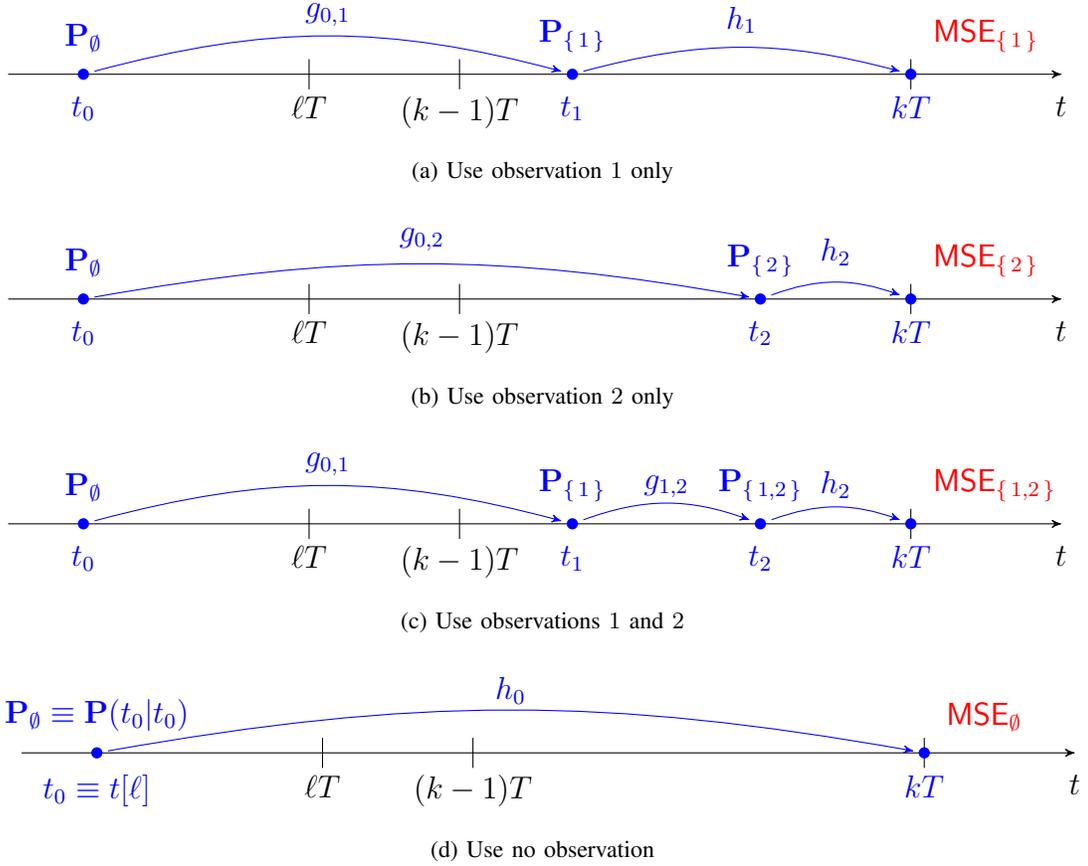
\begin{figure*}
\centering
\subfloat[Use observation $1$ only]{
\begin{tikzpicture}[>=stealth']
\def\T{6}
\def\L{14}
\draw[->] (0,0) -- (\L,0);
\node at (\L,0) [label=below:$t$] {};

\coordinate (l2) at (4,0);
\coordinate (k1) at ($(l2) + (2,0)$);
\coordinate (k2) at ($(k1) + (\T,0)$);
\coordinate (o)  at ($(k1) - (1,0)$);
\coordinate (t0)  at ($(l2) - (3,0)$);
\coordinate (t1)  at ($(k1) + (1.5,0)$);
\coordinate (t2)  at ($(k2) - (2,0)$);

\foreach \k in {l2, k1, k2}
\draw ($(\k)-(0,0.2)$) -- ($(\k)+(0,0.2)$);
\node at (l2) [label=below:$\ell T$] {};
\node at (k1) [label=below:$(k-1)T$] {};
\node (last) at (k2) 
[
label=below:\color{blue} $kT$,
label=above right:\color{red} $\mse_{\set{1}}$,
] {};

\foreach \n in {t0,t1,k2}
	\node at (\n)[blue,circle,fill,inner sep=1.5pt]{};
\node (0) at (t0) 
[
label=below:{\color{blue} $t_0$},
label=above:\color{blue} $\mxp_{\emptyset}$,
] {};
\node (1) at (t1) 
[
label=below:{\color{blue} $t_1$},
label=above:{\color{blue} $\mxp_{\set{1}}$}
] {};

\draw[->,blue] (0) to [out=15,in=165] (1); 
\draw[->,blue] (1) to [out=15,in=165] (last); 

\node[text=blue] (a) at ($(t0)!0.5!(t1)$) [above=0.5cm] {{$g_{0,1}$}};
\node[text=blue] (b) at ($(t1)!0.5!(k2)$) [above=0.4cm] {$h_{1}$};
\end{tikzpicture}
}

\subfloat[Use observation $2$ only]{
\begin{tikzpicture}[>=stealth']
\def\T{6}
\def\L{14}
\draw[->] (0,0) -- (\L,0);
\node at (\L,0) [label=below:$t$] {};

\coordinate (l2) at (4,0);
\coordinate (k1) at ($(l2) + (2,0)$);
\coordinate (k2) at ($(k1) + (\T,0)$);
\coordinate (o)  at ($(k1) - (1,0)$);
\coordinate (t0)  at ($(l2) - (3,0)$);
\coordinate (t1)  at ($(k1) + (1.5,0)$);
\coordinate (t2)  at ($(k2) - (2,0)$);

\foreach \k in {l2, k1, k2}
\draw ($(\k)-(0,0.2)$) -- ($(\k)+(0,0.2)$);
\node at (l2) [label=below:$\ell T$] {};
\node at (k1) [label=below:$(k-1)T$] {};
\node (last) at (k2) 
[
label=below:\color{blue} $kT$,
label=above right:\color{red} $\mse_{\set{2}}$,
] {};

\foreach \n in {t0,t2,k2}
	\node at (\n)[blue,circle,fill,inner sep=1.5pt]{};
\node (0) at (t0) 
[
label=below:{\color{blue} $t_0$},
label=above:{\color{blue} $\mxp_{\emptyset}$}
] {};
\node (2) at (t2) 
[
label=below:{\color{blue} $t_2$},
label=above:{\color{blue} $\mxp_{\set{2}}$}
] {};

\draw[->,blue] (0) to [out=10,in=170] (2); 
\draw[->,blue] (2) to [out=20,in=160] (last); 

\node[text=blue] (a) at ($(t0)!0.5!(t2)$) [above=0.5cm] {{$g_{0,2}$}};
\node[text=blue] (b) at ($(t2)!0.5!(k2)$) [above=0.3cm] {$h_{2}$};
\end{tikzpicture}
}

\subfloat[Use observations $1$ and $2$]{
\begin{tikzpicture}[>=stealth']
\def\T{6}
\def\L{14}
\draw[->] (0,0) -- (\L,0);
\node at (\L,0) [label=below:$t$] {};

\coordinate (l2) at (4,0);
\coordinate (k1) at ($(l2) + (2,0)$);
\coordinate (k2) at ($(k1) + (\T,0)$);
\coordinate (o)  at ($(k1) - (1,0)$);
\coordinate (t0)  at ($(l2) - (3,0)$);
\coordinate (t1)  at ($(k1) + (1.5,0)$);
\coordinate (t2)  at ($(k2) - (2,0)$);

\foreach \k in {l2, k1, k2}
	\draw ($(\k)-(0,0.2)$) -- ($(\k)+(0,0.2)$);
\node at (l2) [label=below:$\ell T$] {};
\node at (k1) [label=below:$(k-1)T$] {};
\node (last) at (k2) 
[
label=below:\color{blue} $kT$,
label=above right:\color{red} $\mse_{\set{1,2}}$,
] {};

\foreach \n in {t0,t1,t2,k2}
	\node at (\n)[blue,circle,fill,inner sep=1.5pt]{};
\node (0) at (t0) 
[
label=below:{\color{blue} $t_0$},
label=above:{\color{blue} $\mxp_{\emptyset}$},
] {};
\node (1) at (t1) 
[
label=below:{\color{blue} $t_1$},
label=above:{\color{blue} $\mxp_{\set{1}}$},
] {};
\node (2) at (t2) 
[
label=below:{\color{blue} $t_2$},
label=above:{\color{blue} $\mxp_{\set{1,2}}$},
] {};

\draw[->,blue] (0) to [out=15,in=165] (1); 
\draw[->,blue] (1) to [out=20,in=160] (2); 
\draw[->,blue] (2) to [out=20,in=160] (last); 

\node[text=blue] (a) at ($(t0)!0.5!(t1)$) [above=0.5cm] {{$g_{0,1}$}};
\node[text=blue] (b) at ($(t1)!0.5!(t2)$) [above=0.2cm] {$g_{1,2}$};
\node[text=blue] (c) at ($(t2)!0.5!(k2)$) [above=0.2cm] {$h_{2}$};
\end{tikzpicture}
}

\subfloat[Use no observation]{
\begin{tikzpicture}[>=stealth']
\def\T{6}
\def\L{14}
\draw[->] (0,0) -- (\L,0);
\node at (\L,0) [label=below:$t$] {};

\coordinate (l2) at (4,0);
\coordinate (k1) at ($(l2) + (2,0)$);
\coordinate (k2) at ($(k1) + (\T,0)$);
\coordinate (o)  at ($(k1) - (1,0)$);
\coordinate (t0)  at ($(l2) - (3,0)$);
\coordinate (t1)  at ($(k1) + (1.5,0)$);
\coordinate (t2)  at ($(k2) - (2,0)$);

\foreach \k in {l2, k1, k2}
\draw ($(\k)-(0,0.2)$) -- ($(\k)+(0,0.2)$);
\node at (l2) [label=below:$\ell T$] {};
\node at (k1) [label=below:$(k-1)T$] {};
\node (last) at (k2) 
[
label=below:\color{blue}$kT$, 
label=above right:\color{red}$\mse_{\emptyset}$
] {};

\foreach \n in {t0,k2}
\node at (\n)[blue,circle,fill,inner sep=1.5pt]{};
\node (0) at (t0) 
[
label=below:{\color{blue} $t_0\equiv t[\ell]$}, 
label=above:{\color{blue} $\mxp_{\emptyset}\equiv \mxp(t_0|t_0)$}
] {};

\draw[->,blue] (0) to [out=10,in=170] (last); 

\node[text=blue] (a) at ($(t0)!0.5!(k2)$) [above=0.5cm] {{$h_0$}};
\end{tikzpicture}
}

\caption{Evaluating the estimate of $\vx(t)$ using at most 2 harvested observations. 
In (a), the estimate at $t_0\equiv t[\ell]$ is filtered with the observation at $t_1$ to estimate the state at $t_1$ which is then used to predict the state at $kT$. To compute the \gls{mse} of the estimator using an observation at $t_1$, $\mxp_{\emptyset}$ is used to compute $\mxp_{\set{1}}$, and then $\mxp_{\set{1}}$ is used to compute $\mse_{\set{1}}$ 
In (b), the observation at $t_2$ is used instead of that at $t_1$. 
In (c), the estimate at $t_0$ is filtered with the observation at $t_1$ to estimate the state at $t_1$; then, the estimate at $t_1$ is filtered with the observation at $t_1$ to estimate the state at $t_2$. Finally, the estimate at $t_2$ is used to predict the state at $kT$. To compute the \gls{mse} of this estimator, $\mxp_{\emptyset}$ is used to compute $\mxp_{\set{1}}$, and $\mxp_{\set{1}}$ is used to compute $\mxp_{\set{1,2}}$. Finally, $\mxp_{\set{1,2}}$ is used to compute $\mse_{\set{1,2}}$
In (d), as the executive harvests no observations, so it uses $\mxp_{\emptyset}$, the aposteriori covariance matrix of the state estimate formed at $t_0$ to compute the \gls{mse} of the predicted state at $kT$ from that at $t_0$.}
\label{fig:kf_example}
\end{figure*}

We refer to the previous example to highlight two key points in deriving the algorithm that finds the optimal observation sequence (covered in \secref{s:b4}).
First, the executive evaluates all possible observation sequences, $(1)$, $(2)$, $(1,2)$ and $\emptyset$ \emph{probabilistically}, i.e. without knowing the value of a single observation. Indeed, the Bayesian framework allows for evaluating different estimators without knowing the content of the very observations that go into these estimators, only their timestamps. Starting with an error covariance $\mxp_\emptyset$ and given the set of observation timestamps, a number of deterministic functions are applied to evaluate the \gls{mse} for all observation sequences. The sequence that yields the smallest \gls{mse} is deemed optimal.
Second, keeping track of the running covariance matrix for an observation sequence provides a computational speedup when computing the \gls{mse} for a longer sequence. The \gls{mse} for an observation sequence, e.g. $\set{1,2,3}$, cannot be computed from that of its longest prefix, i.e. $\set{1,2}$, but rather from the running covariance matrix of that prefix.
Consider for example $\mse_{\set{1}}$ and $\mse_{\set{1,2}}$ given as
\begin{align}
\mse_{(1)} &=  \big( \trno \circ h_1 \circ g_{0,1} \big)  \big(\mxp_\emptyset\big), \\
\mse_{(1,2)} &= \big( \trno \circ h_2 \circ g_{1,2} \circ g_{0,1} \big) \big(\mxp_\emptyset\big).
\end{align}
As the functions $h_2$, $g_{1,2}$ and $g_{0,1}$ are not commutative (also, $h_1$ and $g_{0,1}$) and the trace operation lossy, there is no clear relationship between the two \glspl{mse}. There is, however, a relationship between $\mse_{\set{1,2}}$, $\mxp_{\set{1,2}}$, and $\mxp_{\set{1}}$:
\begin{align}
\mxp_{\set{1,2}} &= g_{1,2} \big(\mxp_{\set{1}}\big),\\
\mse_{(1,2)} &= \big( \trno \circ h_2 \circ g_{1,2} \big) \big(\mxp_{\set{1}}\big).
\end{align}
By keeping track of the running covariance matrix for every observation sequence, the \gls{mse} for any desired sequence can be quickly computed from the running covariance matrix for its longest prefix.

\section{Optimization Problem}\label{s:b3}
In this section, we define the objective function through which the executive selects the optimal observation sequence and derive the expression for the constraint that limits this selection.

\subsection{Channel and Medium Access Models}
We make a number of assumptions on the channel, medium access, and scheduling models for the purpose of analytical tractability. We assume that all observer-executive and executive-agent channels are block fading wireless channels with coherence times equal to the decision cycle period. Since the channel state determines the transmission rate and airtime (transmission time), these two quantities change every cycle as well. We assume that the executive has perfect knowledge of all the channels in every cycle and of the sizes of the observation and action payloads. In practice, this can be accomplished by prepending the observation and action packets by pilot sequences for the purpose of channel estimation. 

At the start of every decision cycle, we assume the executive solves an optimization problem for the optimal observation sequence. Without loss of generality, we look at an arbitrary decision cycle with $L$ representative observations available for harvesting, one per observer. As observation timestamps are known offline (see \secref{s:b1}), we define $\set{t_\ell}_{\ell=1}^L$ to be the timestamps of the representative observations of the $L$ observers. We suppress the dependence of these timestamps, and other variables to come, on the decision cycle index $k$. We define $\set{O_\ell}_{\ell=1}^L$ and $\set{A_m}_{m=1}^M$ to be the airtimes for transmitting the observations and actions. We assume that the observations are transmitted on a \gls{fcfs} basis with no preemption. We also assume that the actions are transmitted once the observations are received. We do not specify a service policy for transmitting actions, but we assume that there are no breaks between these transmissions. To distinguish the order of observations from their identities while maintaining notational clarity, we order the set of observation timestamps $\set{t_\ell}_{\ell=1}^L$ in any given cycle into the set $\set{o_\ell}_{\ell=1}^L$ so that
\begin{align}
\begin{split}
o_1 &= \min\set{t_1,\dots,t_{N}} \\
o_\ell &= \min\set{t_1,\dots,t_{N}}\setminus\set{o_1,\dots,o_{\ell-1}}.
\end{split}
\end{align}
Similar to \secref{s:b2}, we define $o_0=t_0$ to be the timestamp of the latest observation from a prior decision cycle.

\subsection{Constrained Selection}
Every decision cycle, the executive solves an optimization problem whose outcome is an observation sequence that maximizes its knowledge of the current state of the system. What constrains the executive's choice of such a sequence is the limited time window in which it has to execute all of its functions: receiving observation packets, estimating the current state of the system, determining corrective actions, and transmitting action packets. Therefore, the executive can only harvest the observations that leave ample time to perform the remaining functions. In \secref{s:b2}, we looked at an example that derives the \gls{mse} of different state estimates from all possible observation sequences. Now, we derive the expression for the constraint that limits the choice of the optimal observation sequence. We first derive the constraint for when there are $L=3$ representative observations. We later rederive the constraint when $L$ is arbitrary.

When 3 observations are available, there are a total of $2^3=$ 8 possible choices with running error covariance matrices
\begin{alignat*}{3}
&\mxp_{\emptyset}=\mxp(o_0\,|\,o_0), \\
&\mxp_{(1)} =   g_{0,1} \; (\mxp_\emptyset),
&&\mxp_{(2)} =   g_{0,2} \; (\mxp_\emptyset),
&&\mxp_{(3)} =   g_{0,3} \; (\mxp_\emptyset), \\
&\mxp_{(1,2)} =   g_{0,1} \circ g_{1,2} \; (\mxp_\emptyset),
&&\mxp_{(1,3)} =   g_{0,1} \circ g_{1,3} \; (\mxp_\emptyset),
&&\mxp_{(2,3)} =   g_{0,2} \circ g_{2,3} \; (\mxp_\emptyset), \\
&\mxp_{(1,2,3)} =   g_{0,1} \circ g_{1,2} \circ g_{2,3}\; (\mxp_\emptyset).
&&\phantom{\mxp_{(1,2,3)} =   g_{0,1} \circ g_{1,2} \circ g_{2,3}\; (\mxp_\emptyset),}
\end{alignat*}

We consider two constraints jointly: a \textit{latency} constraint and a \textit{dependency} constraint. According to the latency constraint, the observations and actions shall be exchanged by the end of the decision cycle. According to the dependency constraint, the actions shall be dispatched only after the selected observations are received, reflecting the assumption that every action is calculated as a function of all observation. In future work, we will relax this assumption by letting every action be a function of as little as one observation. Accordingly, an action can be dispatched over the shared wireless channel as soon as its influencing observations are harvested. If observations are sufficiently spaced apart so that their airtimes are nonoverlapping, then dispatching actions would fill in these gaps and increase channel utilization.

Any observation sequence selected by the executive should satisfy the latency and dependency constraints. Suppose that the sequence $\vs=(1,2,3)$ has already been selected by the executive for harvesting. We define $d_\ell$, $1\leq \ell\leq 3$, to be the timestamp when the observation originating at $o_\ell$ is completely transmitted. Since observations are transmitted in order of availability, the timestamps $d_1$, $d_2$, and $d_3$ are related as
\begin{align}
\begin{split}
d_1 &= o_1 + O_1 \\
d_2 &= \max\set{ o_2 + O_2,\;\; d_1 + O_2 } \\
d_3 &= \max\set{ o_3 + O_3,\;\; d_2 + O_3 }.
\end{split}
\end{align}
In the best-case scenario, $d_\ell=o_\ell+O_\ell$ if $o_\ell$ is available when no prior observation is being harvested. In the worst-case scenario, $d_\ell=o_1+O_1+\dots+O_\ell$ if $o_\ell$ is produced while $o_1$ is being harvested; $o_\ell$ has to wait for all prior observations to be harvested. In all cases, $d_{3}$ is the time when \emph{all} observations are harvested and ready to be processed. We refer to $d_3$ as the \textit{end-of-harvest time}. \figref{fig:blockages} shows time diagrams depicting a range of scenarios for harvesting $(1,2,3)$. Every observation is produced at the same time across all 3 diagrams but has a varying airtime. In \figref{fig:blockages_a}, observations are sufficiently spaced apart and their airtimes sufficiently short so that every observation is harvested as soon as it is produced. All 3 observations are harvested at $d_{3}=o_3+O_3$. In \figref{fig:blockages_b}, $o_2$ blocks $o_3$, so $d_3=o_2+O_2+O_3$. In \figref{fig:blockages_c}, there is a domino effect: $o_1$ blocks $o_2$, and $o_2$ blocks $o_3$; consequently, $o_2$ and $o_3$ have to wait and $d_3=o_1+O_1+O_2+O_3$. The values of $\set{o_n}$ and $\set{O_n}$ are known to the executive prior to selection, so it can compute an expression for $d_2$ and $d_3$ as follows
\begin{align}
d_2 &= 
\begin{cases}
o_2+O_2     & \mbox{if $o_2 > d_1$,} \\
o_1+O_1+O_2 \phantom{+O_3} & \mbox{if $o_2 \leq d_1$.}
\end{cases}
\\
d_3 &= 
\begin{cases}
o_3+O_3         & \mbox{if $o_3 > d_2$,} \\
o_2+O_2+O_2     & \mbox{if $o_3 \leq d_2$ and $o_2 > d_1$,} \\
o_1+O_1+O_2+O_3 & \mbox{if $o_3 \leq d_2$ and $o_2 \leq d_1$.}
\end{cases}
\end{align}
The variables $d_2$ and $d_3$ can be expressed in a form that is agnostic of the different scenarios of \figref{fig:blockages} as
\begin{align}\label{eq:d3}
\begin{split}
d_2 &= \max\set{o_2+O_2,\;o_1+O_1+O_2} \\
d_3 &= \max\set{o_3+O_3,\;o_2+O_2+O_3,\;o_1+O_1+O_2+O_3}.
\end{split}
\end{align}
We have just determined the time it takes to harvest the observation sequence. The remaining time in the decision cycle is dedicated to dispatching the actions. As the actions are dispatched back-to-back, the constraint finally is
\begin{align}\label{eq:constraint}
d_3 + \sum_{m=1}^{M} A_m < T.
\end{align}

The end-of-harvest time $d_3$ assumes that $(1,2,3)$ has been already selected for harvesting, so the resulting constraint is tailored to that particular observation sequence. This defeats the purpose for having a constraint which is steering the selection process. While $o_\ell$ is known to the executive prior to running the selection process, $d_\ell$ is determined one observation at a time through the selection process. We thus express $d_{|\vs|}$ for an arbitrary observation sequence $\vs$ through the mapping $\vs \mapsto d_{|\vs|}$ as follows
\begin{align}
\begin{split}
\emptyset \;&\; \mapsto 0 \\
(1) \;&\; \mapsto \max\set{0,\;o_1+O_1} \\
(2) \;&\; \mapsto \max\set{0,\;o_2+O_2} \\
(3) \;&\; \mapsto \max\set{0,\;o_3+O_3} \\
(1,2) \;&\; \mapsto \max\set{0,\;o_2+O_2,\;o_1+O_1+O_2} \\
(1,3) \;&\; \mapsto \max\set{0,\;o_3+O_3,\;o_1+O_1+O_3} \\
(2,3) \;&\; \mapsto \max\set{0,\;o_3+O_3,\;o_2+O_2+O_3} \\
(1,2,3) \;&\; \mapsto \max\set{0,\;o_3+O_3,\;o_2+O_2+O_3,\;o_1+O_1+O_2+O_3}. \\
\end{split}
\end{align}
With the end-of-harvest time redefined, an arbitrary sequence $\vs$ must satisfy the constraint
\begin{align}\label{eq:constraint_2}
d_{|\vs|} + \sum_{m=1}^{M} A_m < T.
\end{align}


%
%

\newcommand\T{15}

\begin{figure*}[t]
\centering
\captionsetup[subfigure]{oneside,margin={2cm,0cm}}
\subfloat[$o_3$ is unblocked]{\label{fig:blockages_a}
\begin{scaletikzpicturetowidth}{\textwidth}
\begin{tikzpicture}[scale=\tikzscale,xscale=1,transform shape]
\draw [-latex'](-0.5,0) -- (0,0)coordinate(o) -- (\T+2,0)coordinate(tmax) node[above]{$t$};
\draw[thick] (0,0.2) -- (0,-0.2) node[below]{$0$};
\draw[thick] (\T,0.2) -- (\T,-0.2) node[below]{$T$};

\coordinate(o1) at (1,0);
\node at (o1|-,-0.2) [above=0.4, left]{$o_1$};
\coordinate(d1) at (5,0);
\node at (d1|-,-0.2) [below left]{$d_1$};

\draw[fill=gray!20!white] (o1) -- (o1|-,0.8) -- (d1|-,0.8) -- (d1) -- cycle;

\draw[-latex,thick] (o1.north west|-,1.5) --(o1);
\draw[-latex,thick] (d1) -- (d1.north west|-,-0.7);

\coordinate(mid1) at ($(o1)!0.5!(d1)$);
\node at (mid1|-,1.1){$O_1$};

\coordinate(o2) at (6,0);
\node at (o2|-,-0.2) [above=0.4, left]{$o_2$};
\coordinate(d2) at (7,0);
\node at (d2|-,-0.2) [below left]{$d_2$};

\draw[fill=gray!20!white] (o2) -- (o2|-,0.8) -- (d2|-,0.8) -- (d2) -- cycle;
\coordinate(mid2) at ($(o2)!0.5!(d2)$);
\node at (mid2|-,1.1){$O_2$};

\draw[-latex,thick] (o2.north west|-,1.5) --(o2);
\draw[-latex,thick] (d2) -- (d2.north west|-,-0.7);

\coordinate(o3) at (7.5,0);
\node at (o3|-,-0.2) [above=0.4, left]{$o_3$};
\coordinate(d3) at (10,0);
\node at (d3|-,-0.2) [below left]{$d_3$};

\draw[fill=gray!20!white] (o3) -- (o3|-,0.8) -- (d3|-,0.8) -- (d3) -- cycle;
\coordinate(mid3) at ($(o3)!0.5!(d3)$);
\node at (mid3|-,1.1){$O_3$};

\draw[-latex,thick] (o3.north west|-,1.5) --(o3);
\draw[-latex,thick] (d3) -- (d3.north west|-,-0.7);
\end{tikzpicture}	
\end{scaletikzpicturetowidth}	
}

\subfloat[$o_2$ blocks $o_3$]{\label{fig:blockages_b}
\begin{scaletikzpicturetowidth}{\textwidth}
\begin{tikzpicture}[scale=\tikzscale,xscale=1,transform shape]
\draw [-latex'](-0.5,0) -- (0,0)coordinate(o) -- (\T+2,0)coordinate(tmax) node[above]{$t$};
\draw[thick] (0,0.2) -- (0,-0.2) node[below]{$0$};
\draw[thick] (\T,0.2) -- (\T,-0.2) node[below]{$T$};

\coordinate(o1) at (1,0);
\node at (o1|-,-0.2) [above=0.4, left]{$o_1$};
\coordinate(d1) at (5,0);
\node at (d1|-,-0.2) [below left]{$d_1$};

\draw[fill=gray!20!white] (o1) -- (o1|-,0.8) -- (d1|-,0.8) -- (d1) -- cycle;
\coordinate(mid1) at ($(o1)!0.5!(d1)$);

\draw[-latex,thick] (o1.north west|-,1.5) --(o1);
\draw[-latex,thick] (d1) -- (d1.north west|-,-0.7);

\coordinate(o2) at (6,0);
\node at (o2|-,-0.2) [above=0.4, left]{$o_2$};
\coordinate(d2) at (8,0);
\node at (d2|-,-0.2) [below left]{$d_2$};

\draw[fill=gray!20!white] (o2) -- (o2|-,0.8) -- (d2|-,0.8) -- (d2) -- cycle;
\coordinate(mid2) at ($(o2)!0.3!(d2)$);

\draw[-latex,thick] (o2.north west|-,1.5) --(o2);
\draw[-latex,thick] (d2) -- (d2.north west|-,-0.7);

\coordinate(o3) at (7.5,0);
\node at (o3|-,-0.2) [above=0.4, left]{$o_3$};
\coordinate(d3) at (10.5,0);
\node at (d3|-,-0.2) [below left]{$d_3$};

\draw[fill=gray!20!white] (d2) -- (d2|-,0.8) -- (d3|-,0.8) -- (d3) -- cycle;
\coordinate(mid3) at ($(d2)!0.5!(d3)$);

\draw[-latex,thick] (o3.north west|-,1.5) --(o3);
\draw[-latex,thick] (d3) -- (d3.north west|-,-0.7);
\end{tikzpicture}	
\end{scaletikzpicturetowidth}
}

\subfloat[$o_2$ blocks $o_3$ and $o_2$ blocks $o_3$]{\label{fig:blockages_c}
\begin{scaletikzpicturetowidth}{\textwidth}
\begin{tikzpicture}[scale=\tikzscale,xscale=1,transform shape]
\draw [-latex'](-0.5,0) -- (0,0)coordinate(o) -- (\T+2,0)coordinate(tmax) node[above]{$t$};
\draw[thick] (0,0.2) -- (0,-0.2) node[below]{$0$};
\draw[thick] (\T,0.2) -- (\T,-0.2) node[below]{$T$};

\coordinate(o1) at (1,0);
\node at (o1|-,-0.2) [above=0.4, left]{$o_1$};
\coordinate(d1) at (6.5,0);
\node at (d1|-,-0.2) [below left]{$d_1$};

\draw[fill=gray!20!white] (o1) -- (o1|-,0.8) -- (d1|-,0.8) -- (d1) -- cycle;
\coordinate(mid1) at ($(o1)!0.5!(d1)$);

\draw[-latex,thick] (o1.north west|-,1.5) --(o1);
\draw[-latex,thick] (d1) -- (d1.north west|-,-0.7);

\coordinate(o2) at (6,0);
\node at (o2|-,-0.2) [above=0.4, left]{$o_2$};
\coordinate(d2) at (8.5,0);
\node at (d2|-,-0.2) [below left]{$d_2$};

\draw[fill=gray!20!white] (d1) -- (d1|-,0.8) -- (d2|-,0.8) -- (d2) -- cycle;
\coordinate(mid2) at ($(d1)!0.5!(d2)$);

\draw[-latex,thick] (o2.north west|-,1.5) --(o2);
\draw[-latex,thick] (d2) -- (d2.north west|-,-0.7);

\coordinate(o3) at (7.5,0);
\node at (o3|-,-0.2) [above=0.4, left]{$o_3$};
\coordinate(d3) at (11,0);
\node at (d3|-,-0.2) [below left]{$d_3$};

\draw[fill=gray!20!white] (d2) -- (d2|-,0.8) -- (d3|-,0.8) -- (d3) -- cycle;
\coordinate(mid3) at ($(o3)!0.5!(d3)$);

\draw[-latex,thick] (o3.north west|-,1.5) --(o3);
\draw[-latex,thick] (d3) -- (d3.north west|-,-0.7);
\end{tikzpicture}
\end{scaletikzpicturetowidth}
}
\caption{Three different scenarios for three observations. Observation $i$ is produced by the observer at $o_i$ and received by the executive at $d_i$; it requires $O_i$ time units to be transmitted. A downward-pointing indicates the time an observation is produced. An upward-pointing arrow indicates the time an observation is received by the executive. Observations are transmitted on a \gls{fcfs} basis, and transmissions are scheduled upon observation availability as long as no other transmission is in progress. In \figref{fig:blockages_a}, every observation is produced and fully transmitted before the next observation is produced. In \figref{fig:blockages_b}, $o_2$ completes transmission after $o_3$ is produced. In \figref{fig:blockages_c}, $o_1$ completes transmission after $o_2$ is produced, and $o_2$ completes transmission after $o_3$ is produced. }\label{fig:blockages}
\vspace{-20pt}
\end{figure*}
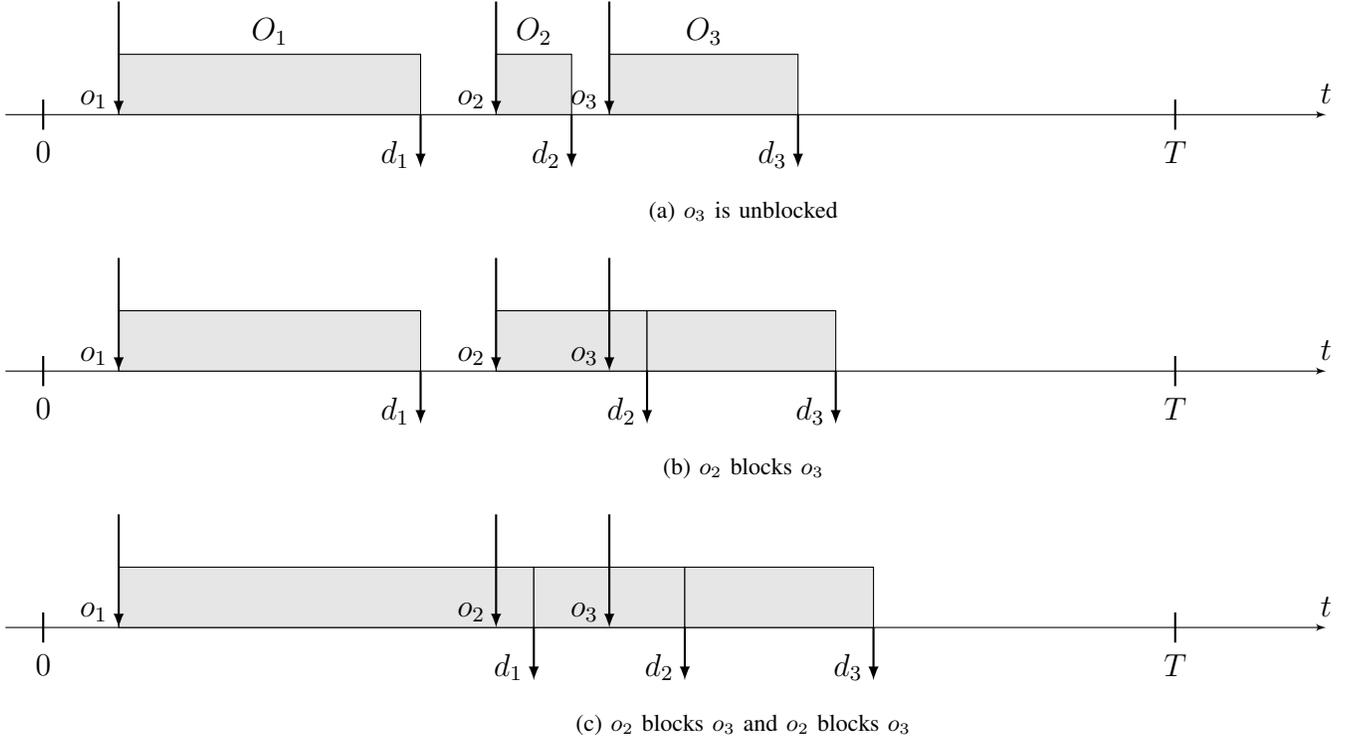

\let\T\undefined

\subsection{Objective Function and Constraint}
We generalize both the objective function, which is the estimation \gls{mse}, and the constraint to an arbitrary number of representative observations $L$.

Given that the running error covariance matrix of the latest estimate $\vxh(o_0|o_0)$ is $\mxp_\emptyset$, the executive's objective is to select the observation sequence $\vs\in\powerset{1,2,\dots,L}$ that minimizes the estimation \gls{mse}
\begin{align}\label{eq:objective_function}
\mse_{\vs}(\mxp_\emptyset) =  
\trno  
\circ h_{s_{-1}}
\circ  g_{s_{-2},s_{-1}} 
\circ \cdots
\circ g_{s_1,s_2}
\circ g_{0,s_1}
\;\; (\mxp_\emptyset).
\end{align}
The end-of-harvest time $d_{\vs}$ of an arbitrary observation sequence $\vs$ should satisfy
\begin{align}\label{eq:constraint_arbitrary}
d_{|\vs|} + \sum_{m=1}^M A_m &< T, 
\intertext{which is equivalent to}
\max_{s_1\leq i\leq s_{|\vs|}} \left( o_i + \sum_{j=i}^{|\vs|} O_j \right) &< T-\sum_{m=1}^M A_m.
\end{align}
We define the shorthand variable $B=T - \sum_{m=1}^M A_m$ to be the \textit{harvesting budget}. The sequence $\vs$ is \textit{schedulable} only if $d_{|\vs|}<B$, and \textit{non-schedulable} otherwise. As the executive knows $\set{o_\ell}$ offline and $\set{O_\ell}$ at the start of every decision cycle, it can verify whether an observation sequence satisfies the constraint.

{
\color{black}
In short, OSP is the following constrained minimization problem:
\tcbset{highlight math style={boxsep=5mm,colback=blue!1!red!1!white}}
\begin{tcolorbox}[ams align]
&\vs^* = 
\argmin_{\vs\in\powerset{1,2,\dots,L}}
\mse_{\vs}(\mxp_\emptyset) =  
\argmin_{\vs\in\powerset{1,2,\dots,L}}
\trno  
\circ h_{s_{-1}}
\circ  g_{s_{-2},s_{-1}} 
\circ \cdots
\circ g_{s_1,s_2}
\circ g_{0,s_1}
\;\; (\mxp_\emptyset)
\intertext{such that}
&\phantom{AAAAAAAAAAAAAA}\max_{s_1\leq i\leq s_{|\vs|}} \left( o_i + \sum_{j=i}^{|\vs|} O_j \right) < B.
\end{tcolorbox}
}

\section{Branch-and-Bound Algorithm}\label{s:b4}
\newcommand{\problem}{\mathcal{P}}
In this section, we propose a \gls{bnb} algorithm to optimally but efficiently solve  OSP. A \gls{bnb} algorithm is a brute-force algorithm that systematically iterates over the search space and eliminates invalid solutions. For our problem, the search space is the set of observations sequences $\powerset{1,2,\dots,L}$. An invalid solution is a \textit{non-schedulable} observation sequence $\vs$: a sequence of observations that cannot be transmitted by the hard delivery deadline. Every decision cycle, the executive runs this algorithm to determine the sequence of observations that maximizes its belief about the state of the system.

We represent the solution space as a \textit{subset forest} of trees as shown in \figref{fig:subset_forest}. Tree nodes are tagged with an observation $o_i$, $i=1,\dots,L$. Every node maps to a unique observations sequence: This sequence is determined by following the path from the root of the tree down to that node. For example, the node marked with ${\color{ForestGreen}o_5}$ in \figref{fig:subset_forest} maps to the sequence $(1,3,4,5)$, a shorthand for $(o_1,o_3,o_4,o_5)$. As every internal node (non-leaf node) has child nodes, every sequence of observations is the \textit{prefix} of another, longer sequence. If the shorter of these two sequences is deemed non-schedulable, then the longer of the two is automatically deemed non-schedulable. Accordingly, every tree in the forest can be pruned and the search space reduced by discarding sequences prefixed by non-schedulable sequences. We now describe the algorithm. 

\paragraph{Order of sequence evaluation} To eliminate potential non-schedulable observation sequences, our algorithm tests sequences of every tree from top to bottom. Additionally, we choose to traverse the trees in the order shown in \figref{fig:subset_forest}, i.e. from left to right. 

\paragraph{Choosing the first observation} The algorithm creates a sequence containing only $o_1$, the earliest observation, if it deems the sequence $(1)$ schedulable; otherwise, it determines that the first tree contains no schedulable sequence and thus no solution. It moves to subsequent trees and repeats the process. If $(1)$ is indeed schedulable, it computes the running covariance matrix $\mxp_{(1)}$ followed by $\mse_{(1)}$.

\paragraph{Choosing the second observation} The algorithm then tries to append the earliest observation $o_{k_2}$ to the existing sequence $(k_1)$ as long as $(k_1,k_2)$ is schedulable. Once the algorithm finds such an $o_{k_2}$, it computes the running covariance matrix $\mxp_{(k_1,k_2)}$ from $\mxp_{(k_1)}$ and $\mse_{(k_1,k_2)}$ from $\mxp_{(k_1,k_2)}$. 

\paragraph{Appending subsequent observations} The algorithm continues appending subsequent observations to a running sequence as long as the constraint in \eqref{eq:constraint_arbitrary} remains satisfied. At every step, it computes both the running covariance matrix and the \gls{mse}. It compares this \gls{mse} to the least encountered \gls{mse} so far, updating the best encountered sequence if the former \gls{mse} is less than the latter.

\paragraph{Pruning the search space} If adding an observation $o_i$ to a sequence $\vs$ violates the constraint, then appending any sequence of observations that starts with $o_i$ will also violate constraint. Accordingly, the algorithm ignores sequences that start with $(\vs, i)$.  Note that appending $o_j$, $j>i$, may not violate the constraint if $O_j < O_i$. In that case, the algorithm continues adding observations past $o_j$ until the constraint is violated. 

\paragraph{Termination} Once the algorithm has examined all sequences in a tree, it moves on to sequences in the next tree and repeats the same procedure. Once the algorithm has examined all all tree, it selects the observation sequence with the least \gls{mse}. 

{\color{black}
\paragraph{Complexity} The algorithm has to explore the entire search space to decide on the optimal observation sequence. Accordingly, the algorithm has to try all $2^L$ possibilities, including the null sequence, so its complexity is exponential in the number of observations, i.e. $O(2^L)$. The complexity is also exponential in the number of observers, i.e. $O(2^N)$, because every observer is limited to a single representative observation per decision cycle.
}


\newcommand\T{15}

\begin{figure*}
\centering
\newcommand{\redo}[1]{ {\color{red}$o_{#1}$} }
\newcommand{\greeno}[1]{ {\color{ForestGreen}$o_{#1}$} }
\begin{forest}
[,phantom, s sep=1cm
[\redo1 [\redo2, edge={red} [\redo3 , edge={red} [$o_4$[$o_5$]][$o_5$]][$o_4$[$o_5$]][$o_5$]] [$o_3$[$o_4$[\greeno5]][$o_5$]] [$o_4$[$o_5$]] [$o_5$]]
[$o_2$ [$o_3$[$o_4$[$o_5$]][$o_5$]] [$o_4$[$o_5$]] [$o_5$]]
[$o_3$ [$o_4$[$o_5$]] [$o_5$]]
[$o_4$[$o_5$]]
[$o_5$]
]
\end{forest}

\caption{The subset forest for $L=5$ representative observations has a total of 32 observation sequences. 
The algorithm finds the optimal observation sequence by searching the forest for a schedulable sequence with the least \gls{mse}.
Every node maps to a unique observation sequence determined by following the path from the root of the tree down to that node. For example, the node marked with ${\color{ForestGreen}o_5}$ maps to the sequence $(1,3,4,5)$, a shorthand for $(o_1,o_3,o_4,o_5)$.
Extending an already nonschedulable sequence does not make for a schedulable one. For example, suppose that $(1,2,3)$, marked in red, is invalid, then so are $(1,2,3,4)$, $(1,2,3,5)$, and $(1,2,3,4,5)$. Other paths, however, like $(1,2,4)$ and $(1,2,4,5)$ are not necessarily invalid and may yield a lower \gls{mse}.}
\label{fig:subset_forest}
\end{figure*}

\let\T\undefined


\paragraph{Pseudocode} We formalize the previous description into the recursive algorithm implemented by the function \textsc{Search}($j$, $d$, $X$, $\vs$). The parameters of the algorithm are the next candidate observation index $j$, the running end-of-harvest time $d$, the running aposteriori covariance matrix $X$ of the latest observation, and the running observation sequence. To run through all sequences starting at with any observation, a function \textsc{SearchBF}($\mxp_\emptyset$) successively calls the function \textsc{Search}($j$, $0$, $\mxp_\emptyset$, $\emptyset$) is called for all $1\leq j\leq L$, and select the sequence with the least \gls{mse}. \textsc{SearchBF} has a straight forward implementation which we skip to save space.
\begin{algorithmic}
\Function{Search}{$j$, $d$, $\mathbf{X}$, $\vs$}\label{alg:bf}
	\If{$i=1$ \bf{or} $o_i>d$}
		\State $d\gets o_j + O_j$
	\Else
		\State $d\gets d+O_j$
	\EndIf
	
	\If{$d>B$}
		\State \Return $(\mathbf{X},\vs)$
	\EndIf
	
	\State $i\gets \vs_{-1}$
	\State $\mathbf{X}^*\gets \mathbf{X}$
	\State $\mathbf{P}^*\gets h_i(\mathbf{X})$
	\State $\vs^*\gets \vs$
	
	\For{$j+1\leq k\leq L$}
		\State $(\mathbf{X},\vs)=$ \Call{Search}{$k$, $d$, $g_{i,j}(\mathbf{X})$, $(\vs,j)$}
		\State $l\gets \vs_{-1}$
		\If{$\tr{h_k(\mathbf{X})} \leq \tr{\mathbf{P}^*}$}
			\State $\mathbf{X}^*\gets \mathbf{X}$
			\State $\mathbf{P}^*\gets h_k(\mathbf{X})$
			\State $\vs^*\gets \vs$	
		\EndIf
	\EndFor
	\State \Return $(\mathbf{X},\vs)$		
\EndFunction
\end{algorithmic}

{\color{black}
While the proposed \gls{bnb} algorithm is optimal, it has exponential complexity and does not scale well with the number of observers. Therefore, an approximation algorithm with reasonable time complexity is warranted. We devise a greedy algorithm to solve the problem, albeit suboptimally. The greedy algorithm lines up the observations in chronological order, $o_1,\dots,o_L$, and attempts to schedule them one after the other. The algorithm first determines the first observations $o_i$ such that the sequence $(i)$ is schedulable. Next, it finds the first observation $o_j$ it encounters such that the new sequence $(i,j)$ is schedulable. The following examples shows why this algorithm is suboptimal. Suppose that the two observations that come in time after $o_i$ are $o_j$ and then $o_k$. The algorithm evaluates the sequence $(i,j)$, as long as its schedule, before evaluating the sequence $(i,k)$, although the latter might yield a smaller \gls{mse}. 

Our greedy algorithm resembles first-come-first-serve (FCFS) scheduling, with the added subtlety that observations that are known not to make the deadline are not scheduled in the first place. In \secref{s:num}, we use the greedy algorithm as a baseline for evaluating the \gls{bnb} algorithm. 
}

\section{Numerical Example}\label{s:num}
In this section, we solve OSP for the \gls{lti} system determined by the matrices
\begin{align}
\mxa =
\begin{bmatrix}
-10.0	& 1.0	& 0 \\
-0.02	& -2.0	& 156.3 \\
0 	& 0 	& -1000
\end{bmatrix},
\;\;
\mxb = 
\begin{bmatrix}
0 \\
0 \\
64 
\end{bmatrix}.
\end{align}
The decision period is $T=0.01$. For the system uncertainty matrix, we use $\mxq = 10^{-2} \mxi$. We use two observation matrices
\begin{align}
\mxc_1 &=
\begin{bmatrix}
1 & 1 & 0 & 0 & 0 & 0 \\
0 & 0 & 1 & 1 & 0 & 0 \\
0 & 0 & 0 & 0 & 1 & 1
\end{bmatrix}^\mathsf{T}
\intertext{and}
\mxc_2 &=
\begin{bmatrix}
-0.684 & -0.684 & 0.504 & 0.504 & 2.180 & 2.180 \\
0.763 & 0.763& 0.765 & 0.765 & -0.554 & -0.554 \\
0.144 & 0.144 & 0.532 & 0.532 &  -0.632 &  -0.632
\end{bmatrix}^\mathsf{T}.
\end{align}
Note that in both matrices, consecutive rows are identical, i.e. every pair of observers are correlated. In the second observation matrix, $\mxc_2$, the even rows are realizations of Gaussian vectors. As for the observation noise, we use a diagonal covariance $\mxr$ matrix parameterized by $\sigma^2_0=10^{-2}$ and $\sigma^2_1=1$. We use a bit string as a subscript for $\mxr$ to indicate the observers with low noise, i.e. a noise variance of $\sigma^2_0$, and those with high noise, i.e. a noise variance of $\sigma^2_1$. The objective behind using multiple observation and observation noise covariance matrices is identifying the effect of the observation model and noise on the choice of optimal observation sequence; for example, 
\begin{align}
\mxr_{010000} &=
\begin{bmatrix}
\sigma^2_0 & 0 & 0 & 0 & 0 & 0	\\
0 & \sigma^2_1 & 0 & 0 & 0 & 0	\\
0 & 0 & \sigma^2_0 & 0 & 0 & 0	\\
0 & 0 & 0 & \sigma^2_0 & 0 & 0	\\
0 & 0 & 0 & 0 & \sigma^2_0 & 0	\\
0 & 0 & 0 & 0 & 0 & \sigma^2_0	\\
\end{bmatrix}.
\end{align}

\paragraph{Estimate quality improves with observation rate} 
We verify the intuition that the frequency of observations improves the quality of the estimate. In \figref{fig:kf_1}, we plot the evolution of true system state with time as well as two state estimates obtained from observations produced at different rates: once every $0.003$ and once every $0.053$. When the observation period is short, estimation from incoherent observations tracks true state. When the observation period is long, estimation will be mostly prediction in cycles when no observations are available. For this particular experiment, we have used the observation matrix $\mxc_2$ and the observation noise matrix $\mxr_{000000}$.

\begin{figure*}[t]
	\centering
	\includegraphics[width=1\textwidth]{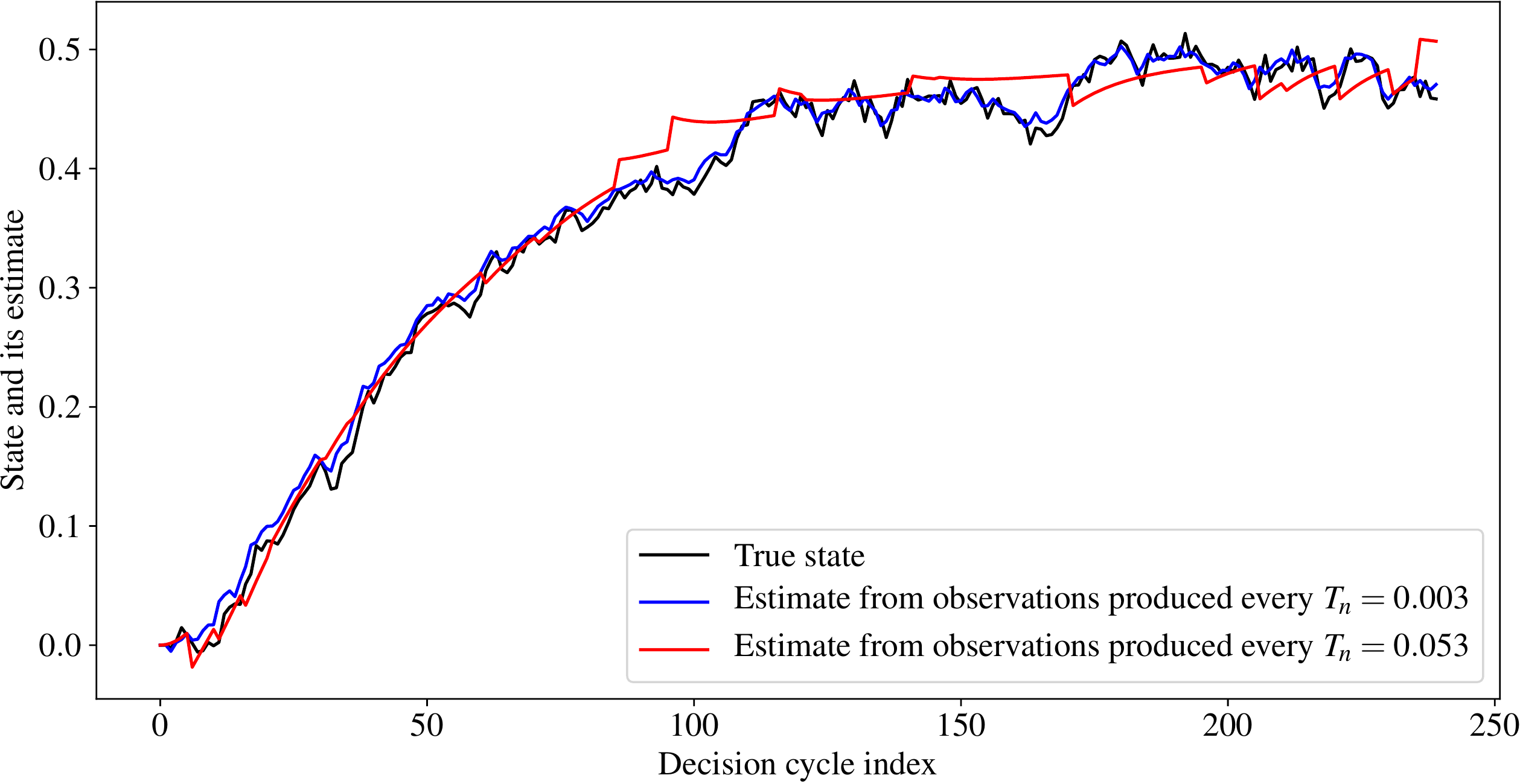}
	\caption{The true state (in black), its estimate from observations produced every $T_n=0.003$ (in blue), and the estimate from observations produced every $T_n=0.053$.}
	\label{fig:kf_1}
\end{figure*}

\paragraph{Performance spread between optimal and greedy algorithms depends on wireless channel quality} We plot the true system state, its estimates using optimal algorithm, and the estimate using the greedy algorithm; we consider two scenarios. In \figref{fig:baseline_diff}, the optimal algorithm outperforms the greedy algorithm. This occurs when early observations require long airtimes or when early observations are noisy. In \figref{fig:baseline_same}, the two algorithms perform similarly. This occurs when early observations require short airtimes or when early observations have little noise. For this particular experiment, we have used the observation matrix $\mxc_2$ and the observation noise matrix $\mxr_{000000}$.

\begin{figure*}[t]
	\centering
	\subfloat[Optimal outperforms greedy\label{fig:baseline_diff}]{
		\includegraphics[width=0.51\textwidth]{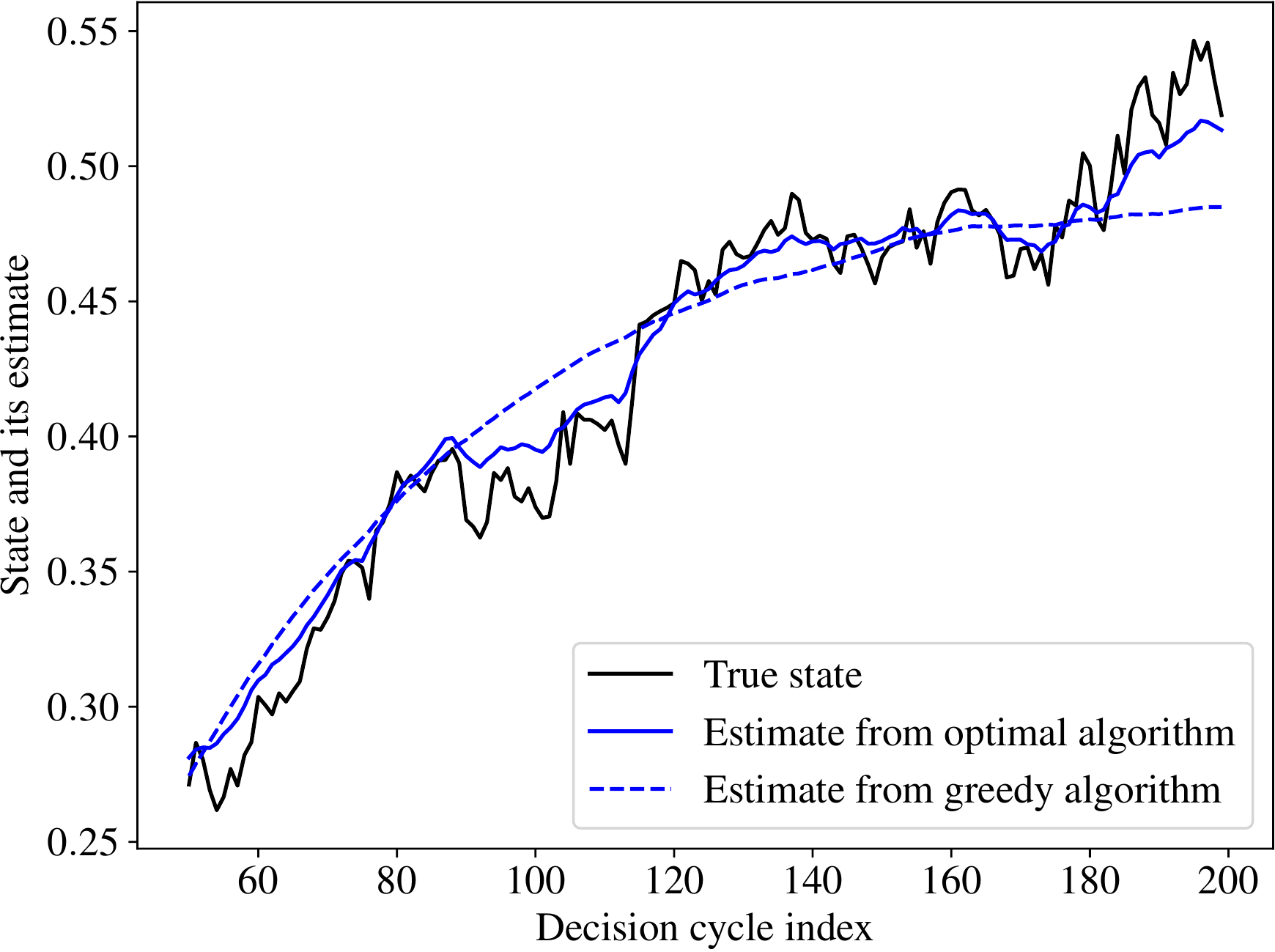}
	}
	\subfloat[Greedy compares to greedy\label{fig:baseline_same}]{
		\includegraphics[width=0.51\textwidth]{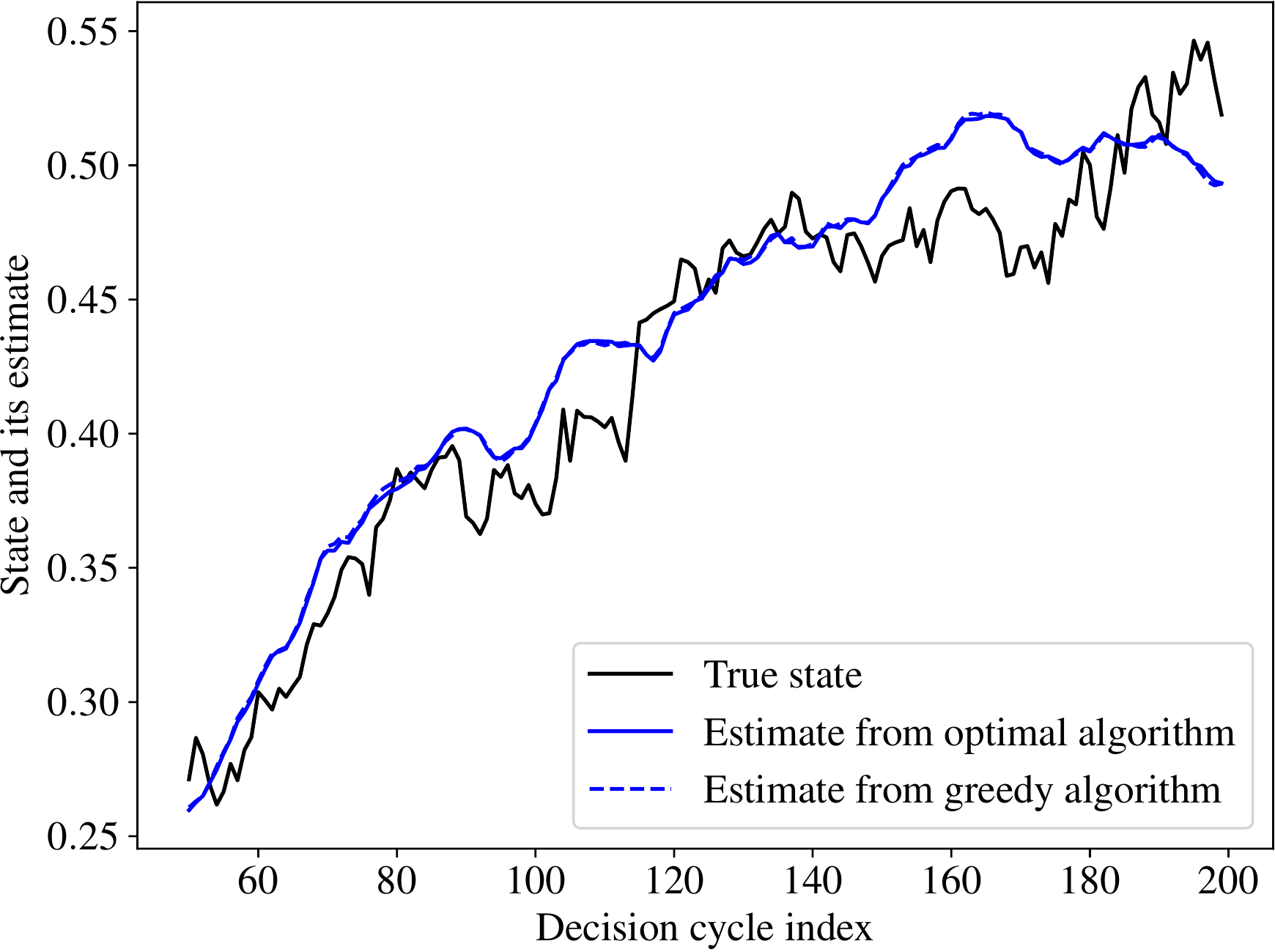}
	}
\caption{The true state (in blue) and its estimate obtained through the optimal algorithm (solid blue) and the greedy algorithm (dashed blue). In \figref{fig:baseline_diff}, the optimal algorithm outperforms the greedy algorithm. In \figref{fig:baseline_same}, the estimates of the two algorithms track the true state almost identically.}
\end{figure*}

\paragraph{Executive discards noisy observation when observability is limited}
We conduct an experiment with the following setup: 1) There are 6 observers, 2) the number of harvested observations to 4 out of 6, 3) every pair of observers are correlated, i.e. $\vc_0=\vc_1$, $\vc_2=\vc_3$ and $\vc_4=\vc_5$, and 4) 1 out of 6 observers is blacked-out, i.e. it has an observation noise variance of $\sigma_1^2$ while the rest have a noise variance of $\sigma_0^2 \ll \sigma_1^2$. Under this setup,   we should expect that the executive polls the observers with less noisy observations. Indeed, the pie charts of \figref{fig:noisy_is_bad_6of7} show that when 4 out of 6 observations can be harvested, the noisy observation is consistently discarded.

\begin{figure*}[t]
\centering
\includegraphics[width=1\textwidth]{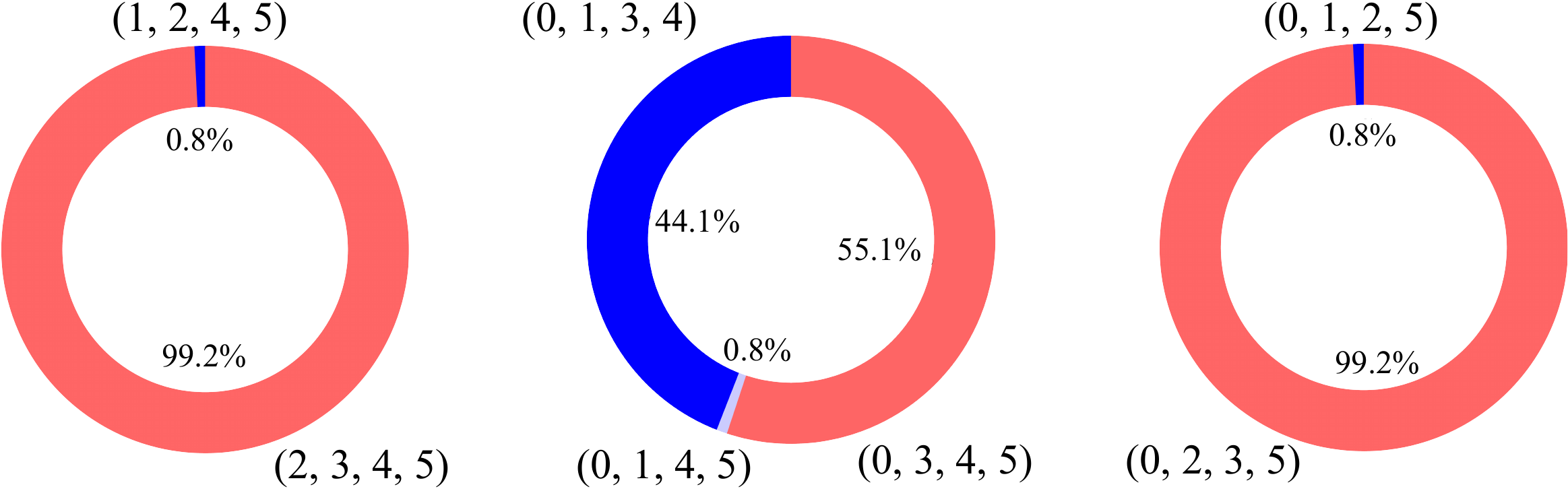}
\caption{When only 4 out of 6 observations can be harvested, the executive consistently excludes the most noisy one. For the pie chart on the left, observer $0$ is blacked out, i.e. the observation noise matrix $\mxr=\mxr_{100000}$. For the chart in the middle, observer $2$ is blacked out. For the chart on the right, observer $4$ is blacked out.}\label{fig:noisy_is_bad_6of7}
\vspace{-20pt}
\end{figure*}

\paragraph{When selection is unconstrained, all observations are harvested} We investigate whether the use of more observations to build the estimate reduces its error.
While this is reminiscent of the information theoretic dogma that information can't hurt \cite{cover:thomas}, \eqref{eq:objective_function} does not prove or disprove.
According to the pie chart \figref{fig:noisy_is_bad_6of7}, we observe that this is in fact true. When the selection is unconstrained, that is when there is ample time to harvest as many observations as needed, the executive selects all observations. In other words, the observation sequence that maximizes the executive's belief about the state of the system includes all observations. For this pie chart, we used $\mxc=\mxc_1$ and $\mxr=\mxr_{000000}$, but the results hold for an arbitrary choice of $\mxc$ and $\mxr$.

\section{Conclusion}\label{s:conc}
In this paper, we have developed a framework that abstracts away the context around different control and multistage decision processes by using a common mathematical model to formulate and solve OSP that the executive solves to schedule the observations that maximize its belief about the state of the system. First, we derived novel Kalman filter equations to predict the state of an \gls{lti} system at decision cycle boundaries from scalar observations produced at different resolutions. Second, we designed a \glsfirst{bnb} algorithm to optimally solve the observer selection problem that systematically iterates over observation sequences. Our numerical simulations showed that the executive selects all observations when there is no limit on the number of observations to be harvested, which suggests that more observations make for a more precise state estimate. With that, OSP bears resemblance to the \textit{knapsack problem}, whose objective is to select a number of items to maximize the total value while meeting a weight limit. While the knapsack problem is a combinatorial optimization problem, it has an efficient dynamic programming solution that could inspire a similar solution to OSP.
\appendices
\section{}\label{app:a}The derivation presented in this appendix mimics the discretization of continuous-time state-space equations in \cite[Chap. 5]{ogata:book}. The state of the system at time $t[l]$ is 
\begin{align}
\vx(t[l]) 
&= e^{\mxa t[l]} \vx(0) 
+ \int\limits_{0}^{t[l]} e^{\mxa(t[l]-\tau)} \mxb \vu(\tau) \diff\tau 
+ \int\limits_{0}^{t[l]} e^{\mxa(t[l]-\tau)} \vv(\tau) \diff\tau \\
\begin{split}
&= e^{\mxa(t[l]-t[k])} \left[ e^{\mxa t[k]} \vx(0) 
+ \int\limits_{0}^{t[k]} e^{\mxa(t[k]-\tau)} \mxb \vu(\tau) \diff\tau 
+ \int\limits_{0}^{t[k]} e^{\mxa(t[k]-\tau)} \vv(\tau) \diff\tau \right] \\
&\;\;\;\;\;\;\;\;\;\;\;
+ \int\limits_{t[k]}^{t[l]} e^{\mxa(t[l]-\tau)} \mxb \vu(\tau) \diff\tau 
+ \int\limits_{t[k]}^{t[l]} e^{\mxa(t[l]-\tau)} \vv(\tau) \diff\tau.
\end{split}\\
&= e^{\mxa(t[l]-t[k])} \vx(t[k]) 
+ \int\limits_{t[k]}^{t[l]} e^{\mxa(t[l]-\tau)} \mxb \vu(\tau) \diff\tau 
+ \int\limits_{t[k]}^{t[l]} e^{\mxa(t[l]-\tau)} \vv(\tau) \diff\tau. \label{eq:app_a_1}
\end{align}

Assuming, $\vu(t)=\vu[k]$ when $kT\leq t \leq (k+1)T$, the middle integral of \eqref{eq:app_a_1} becomes
\begin{align}
\int\limits_{t[k]}^{t[l]} e^{\mxa(t[l]-\tau)} \mxb \vu(\tau) \diff\tau 
&= \int\limits_{t[k]}^{(k+1)T} e^{\mxa(t[l]-\tau)} \mxb \vu[k] \diff\tau
+\sum_{j=k+1}^{l-1} \int\limits_{jT}^{(j+1)T} e^{\mxa(t[l]-\tau)} \mxb \vu[j] \diff\tau
+\int\limits_{lT}^{t[l]} e^{\mxa(t[l]-\tau)} \mxb \vu[l] \diff\tau,
\end{align}
which leads to the desired expression after a change of variables $s=t[l]-\tau$.

Assuming $\vv(t)\sim\mathcal{N}(\vzero,\mxq \delta(t))$, the third integral of \eqref{eq:app_a_1} will be distributed according to $\mathcal{N}(\vzero,\mxq(t[k],t[l]))$, where
\begin{align}
\mxq(t[k],t[l])
= \expectno \left[\int_{t[k]}^{t[l]} e^{\mxa(t[l]-\tau)} \vv(\tau) \diff\tau\right]^2
= \int\limits_{0}^{t[l]-t[k]} e^{\mxa s} \mxq e^{\mxa^* s} \diff s,
\end{align}
which is a standard result.

\section{}\label{app:b}First, if $T_n=T$, observer $n$ produces exactly one observation every decision cycle at its very beginning. Second, if $T_n > T$, then a cycle will contain either one observation from observer $n$ or none. To determine whether cycle $k$ has an observation, we compute the offset $\delta_n[k]$ between the $k$th decision cycle and the latest observation cycle. While $\floor{kT/T_n}$ specifies how many observation cycles have elapsed by the end of  decision cycle $k$, $\delta_n[k]$ specifies the remainder of this division. Accordingly, we express $\delta_n[k]$ as
\begin{equation}\label{eq:delta}
\delta_n[k] = kT \pmod{T_n} = kT - \floor{\frac{kT}{T_n}}\, T_n.
\end{equation}
If $\delta_n[k] \leq T$, then decision cycle $k$ contains exactly one observation at time $kT - \delta_n[k]$; otherwise, it contains none. The third and final case to consider is $T_n < T$. In this case, every decision cycle contains at least one observation. The last observation in decision cycle $k$ will be produced at time $kT-\delta_n[k]$. To determine the timestamp of the first observation, we solve the following equation for the largest integer $m$:
\begin{align}
kT-\delta_n[k] -mT_n \geq (k-1) T,
\end{align}
where the quantity of the left hand side of the inequality gives the timestamp of the first observation. Rearranging terms, we have
\begin{align}
m &\leq \frac{\delta_n[k]-T}{T_n}.
\intertext{Since we are solving for the largest $m$,}
m &= \floor{\frac{\delta_n[k]-T}{T_n}}.
\intertext{Writing $\delta_n[k]$ explicitly, and noting the idempotence of the floor operation,}
m &= \floor{\frac{kT}{T_n}} - \ceil{\frac{(k-1)T}{T_n}}.
\end{align}
The first observation is thus
\begin{align}
\begin{split}
\myt{n}{0}{k} &= kT - \delta_n[k] -mT_n \\
&= \ceil{\frac{(k-1)T}{T_n}} T_n.
\end{split}
\end{align}
\bibliographystyle{IEEEtran}
\bibliography{IEEEabrv,ref}
\end{document}